\def\BibTeX{{\rm B\kern-.05em{\sc i\kern-.025em b}\kern-.08em
    T\kern-.1667em\lower.7ex\hbox{E}\kern-.125emX}}
\def\Htran{\mbox{\tiny $\mathrm{H}$}}
\def\Ttran{\mbox{\tiny $\mathrm{T}$}}
\def\CN{\mathcal{N}_{\mathbb{C}}}
\def\Real{\mathbb{R}}
\def\Complex{\mathbb{C}}
\def\Ex{\mathbb{E}}
\def\sinc{\mathrm{sinc}}
\def\kron{\otimes}
\def\diag{\mathrm{diag}}
\def\imagunit{\mathsf{j}} 
\def\ktx{\boldsymbol{\kappa}}
\def\krx{\vect{k}}
\newcommand{\vect}[1]{{\bf{#1}}}
\theoremstyle{plain}
\newtheorem{theorem}{Theorem}
\newtheorem{lemma}{Lemma}
\newtheorem{corollary}{Corollary}
\newtheorem{remark}{Remark}
\newtheorem{assumption}{Assumption}
\begin{document}

 \title{\Huge{Fourier Plane-Wave Series Expansion for Holographic MIMO Communications}}

\author{
\IEEEauthorblockN{Andrea Pizzo, \emph{Member, IEEE}, Luca Sanguinetti, \emph{Senior Member, IEEE}, \\Thomas L. Marzetta, \emph{Life Fellow, IEEE}\vspace{-0.5cm}
\thanks{
\newline \indent Part of this work was presented at the Asilomar Conference on signals, Systems and Computers, Pacific Grove, CA, 2021 \cite{PizzoASILOMAR20}. A.~Pizzo is with the Department of Information and Communication Technologies, Universitat Pompeu Fabra, 08018 Barcelona, Spain (andrea.pizzo@upf.edu).
L.~Sanguinetti is with the Dipartimento di Ingegneria dell'Informazione, University of Pisa, 56122 Pisa, Italy (luca.sanguinetti@unipi.it). T. L. Marzetta is with the Department of Electrical and Computer Engineering, Tandon School of Engineering, 11201 Brooklyn, NY (tom.marzetta@nyu.edu). 
\newline \indent L. Sanguinetti was partially supported by the Italian Ministry of Education and Research (MIUR) in the framework of the CrossLab project (Departments of Excellence).
}
}}

\maketitle

\begin{abstract}
Imagine a MIMO communication system that fully exploits the propagation characteristics offered by an electromagnetic channel and ultimately approaches the limits imposed by wireless communications.
This is the concept of Holographic MIMO communications. Accurate and tractable channel modeling is critical to understanding its full potential.
Classical stochastic models used by communications theorists are derived under the electromagnetic far-field assumption, i.e. planar wave approximation over the array. However, such assumption breaks down when {electromagnetically} large (compared to the wavelength) antenna arrays are considered. 
In this paper, we start from the first principles of wave propagation and provide a Fourier plane-wave series expansion of the channel response, which fully captures the essence of electromagnetic propagation in arbitrary scattering and is also valid in the (radiative) near-field. The expansion is based on the Fourier spectral representation and has an intuitive physical interpretation, as it statistically describes the angular coupling between source and receiver.
When discretized {uniformly}, it leads to a low-rank semi-unitarily equivalent approximation of the electromagnetic channel in the angular domain.
The developed channel model is used to compute the ergodic capacity of a point-to-point Holographic MIMO system with different degrees of channel state information.

\end{abstract}

\begin{IEEEkeywords}
Electromagnetic MIMO channel modeling, near-field communications, plane-wave decomposition, Fourier spectral representation, Holographic MIMO.
\end{IEEEkeywords}



\section{Introduction}

Communication theorists are constantly looking for new technologies to increase the information rate and reliability of wireless communications. Chief among the technologies that blossomed into major advances is the multiple antenna technology, whose latest instantiation, i.e., Massive MIMO (multiple-input multiple-output), became a reality in 5G~\cite{BJORNSON20193}. 
Inspired by the potential benefits of Massive MIMO with more and more antennas, most of the new research directions envision the use of dense { and electromagnetically large} (compared to the wavelength $\lambda$) antenna arrays, and are taking place under different names, e.g., Holographic MIMO~\cite{PizzoJSAC20}, large intelligent surfaces \cite{Rusek2018}, and reconfigurable intelligent surfaces \cite{RIS}. 
Particularly, the Holographic MIMO concept refers to a MIMO system which is designed to fully exploit the propagation characteristics offered
by an electromagnetic channel; this definition comes from the \emph{holographic} term, which dates back to the ancient greek and literally means ``describe everything'' \cite{DardariHolographic}.

Realistic design and performance assessment of { electromagnetically large multiple} antenna technologies require accurate and tractable channel models {for the wave propagation}. 
Deterministic models (e.g., based on ray tracing) achieve the highest accuracy as they provide accurate predictions of signal propagation in a given environment \cite{Sarkar2003}. However, they rely on numerical electromagnetic solvers of the Maxwell's equations, and hence, they are too site-specific. Stochastic models are the most desirable for communication theorists to work with as they are {representative of a class of environments with common propagation properties} \cite{MarzettaISIT}.
Physically meaningful stochastic models are based on a channel expansion in terms of {plane waves or spherical waves}, as they provide a natural eigensolution of the {wave} equation \cite{PizzoJSAC20}. 
Unlike models based on a spherical wave expansion of the channel (e.g.,~\cite{Gustafsson2006,Gustafsson2009,Gustafsson2010}), channel models that are based on plane waves allow to treat radio wave propagation as a linear system {by using Fourier theory} and without the recourse to special functions \cite{PizzoJSAC20,PizzoIT21}. In addition, {plane-wave models} are particularly useful as they decouple scattering conditions from array characteristics \cite{PoonCapacity,Sayeed2002}.

Unfortunately, the use of plane-wave based models in wireless research is generally confined to the far-field (Fraunhofer) regime only, where wavefronts are approximated as locally planar over the entire array \cite{PoonCapacity,Sayeed2002,Veeravalli,PoonDoF,Kennedy2007,Pollock,LucaBook}. An example is given by the virtual channel representation pioneered in~\cite{Sayeed2002,Veeravalli}. 
Notice that the use of a far-field model in the near-field (radiative Fresnel) regime would lead to magnitude and phase errors at the receiver due to a non-negligible curvature of the incoming wavefronts (e.g., at {the Fraunhofer distance} ${R=2 L^2/\lambda}$ we have a maximum phase error of $\pi/8$ across {an array of size $L$ \cite[Sec.~2.2]{BalanisBook}).
Notice that electromagnetically large arrays pushes the electromagnetic operating regime from the far-field to the near-field  regime, as they are specified by a lower Fraunhofer distance than traditional antenna arrays~\cite{DardariHolographic,Franceschetti_NearField}. To this end,} Table~\ref{tab:near_field} reports the Fraunhofer distance in meters for arrays of practical size at $3$, $28$, $73$, and $142$~GHz carrier frequencies \cite{xing2021}. As seen, the near-field regime may occur at any frequency for applications with not only short- but also mid-range distances.

Recently, {however,} \cite{PizzoASILOMAR20,PizzoIT21} have brought to the attention of the wireless community that wave propagation can always be formulated in terms of {plane waves} irrespective of the distance between source and receiver (i.e., even in the near-field) and under arbitrary scattering conditions.
This formulation builds upon the fact that every transmitted spherical wave can be decomposed \emph{exactly} into an infinite number of {plane waves}~\cite{Weyl,ChewBook,PlaneWaveBook}.
Upon interaction with the scatterers, another (possibly) infinite number of received {plane waves} is created contributing to the receive field. 
The entire {effect of the} scattering mechanism is embedded into an angular response that maps propagation from every transmit direction to every other receive direction~\cite{Saxon,NietoWolf}. 
An analytically tractable stochastic model for the angular response is obtained by {selecting its entries as being statistically uncorrelated from one direction to another, which implies the field to be spatially stationary in the radiative near-field region~\cite{PizzoIT21,MarzettaNOKIA}.}

\begin{table}[t] 
        \caption{Fraunhofer distances for different array apertures.\vspace{-0cm}}  \label{tab:near_field}
\centering
    \begin{tabular}{|c|c|c|c|c|c|} 
    \hline
    {Maximum size $L$\,[m]} & {$3$~GHz} & {$28$~GHz} & {$73$~GHz}  & {$142$~GHz}    \\     \hline\hline 
     	$0.1$ &  	$-$ & 		$1.9$ & 	$4.9$ &	$9$   \\  \hline
	$0.5$ &  	$5$ & 		$47$ & 	$122$ &	$237$   \\  \hline
      	$1$ &  	$20$   & 		$187$ & 	$487$ &	 $-$   \\  \hline
     	$3$ & 	$180$  & 		$-$ & 	$-$ &	 $-$  \\  \hline
    \end{tabular}
\end{table}

\subsection{Contributions}

We consider wireless communications between two parallel planar arrays in a three-dimensional (3D) arbitrary scattered medium and provide {a continuous description of wireless propagation through an approximated \emph{Fourier plane-wave series expansion} of an electromagnetic random channel.}
The provided model complies with the physics of wave propagation and incorporates spatial correlation effects due to directionality of the field generated by the scattering.
It is based on a discretization of the Fourier spectral representation of stationary electromagnetic random fields provided in~\cite{PizzoIT21}, which asymptotically yields a continuum of uncorrelated and circularly-symmetric, complex-Gaussian random coefficients. For finite $L/\lambda$ values, only a subset of these coefficients carries the essential channel information, thus revealing the quantized nature of the physical world. The variances of these coefficients fully describe the field statistically and have an intuitive physical interpretation, as being the strengths of the angular coupling between the source and receive arrays, which can thus be measured accordingly. They are determined by the joint propagation characteristics at both link ends. {Precisely, the provided channel description can be regarded as the Karhunen-Loeve expansion of a stationary electromagnetic random field as $L/\lambda\gg 1$.}

When discretized {uniformly} at Nyquist's spacing, the Fourier plane-wave series expansion yields a stochastic description of the electromagnetic MIMO channel in which the array geometry and scattering conditions are perfectly separated.
The former is a deterministic effect that changes the domain of representation from spatial to angular (and vice-versa) and is performed by a {double 2D discrete Fourier transform (DFT) operation.} 
The latter is a stochastic effect that accounts for wave propagation under different environments. Notably, the angular domain provides a low-rank semi-unitarily equivalent approximation of the electromagnetic channel.
The decoupling property of the model yields an efficient hybrid structure for the transceiver architecture that accounts for a double 2D DFT operation in the analog stage and enables the design of array configuration and signal processing algorithms separately. Since the spatial domain offers a highly redundant description of the electromagnetic channel, a significant complexity reduction (e.g., channel estimation, optimal signaling, coding) can be achieved. The developed  model is finally used to compute the ergodic capacity for different degrees of channel state information.

We conclude this section by observing that the Fourier plane-wave series expansion derived in this paper differs from the one computed in \cite[Sec.~V]{PizzoJSAC20} in the following aspects: \emph{i}) it encompasses both link ends (i.e., source and receiver) while only the receiver is considered in~\cite{PizzoJSAC20}; \emph{ii}) the analytical framework in Section III.C for the computation of the coupling coefficients applies to an arbitrary configuration of scatterers while only isotropic scattering is considered in~\cite{PizzoJSAC20}; \emph{iii}) it considers the practical case where the source and receiver are composed of a finite number of radiative/sensing elements.

\subsection{Outline of the Paper and Notation}

The remainder of this paper is organized as follows. In Section~\ref{sec:scattering_model}, we briefly review the Fourier plane-wave representation from~\cite{PizzoIT21}. This is used in Section~\ref{sec:Fourier_series} to derive a novel Fourier plane-wave series expansion of an electromagnetic random channel. Suitably discretized, this expansion yields in Section~\ref{sec:MIMO_channel_model} a stochastic description of an electromagnetic MIMO channel. Section~\ref{sec:capacity} uses the provided model to analyze the capacity of the channel and includes numerical results for illustration. Final discussions are drawn in Section~\ref{sec:conclusions}.
  
We use upper (lower) case letters for angular (spatial) entities and boldfaced letters for vectors and matrices. Sets are indicated by calligraphic letters.  For a set $\mathcal{X}$, $|\mathcal{X}|$ and $\mathbbm{1}_{\mathcal{X}}(x)$ are the Lebesgue measure and indicator function.
The notation ${n \sim \CN(0, \sigma^2)}$ stands for a circularly-symmetric complex-Gaussian random variable with variance $\sigma^2$. $\Ex\{\cdot\}$ is the expectation operator. The Hadamard and Kronecker products are $\vect{A} \odot \vect{B}$ and $\vect{A} \otimes \vect{B}$.
We denote $\vect{I}_N$ the $N\times N$ identity matrix and $\diag(\vect{a})$ the diagonal matrix with elements from $\vect{a}$.
$\Real^n$ is the $n$-dimensional space of real-valued numbers.
A general point in $\Real^3$ is described by $\vect{r} = x \hat{\vect{x}} + y \hat{\vect{y}} + z \hat{\vect{z}}$ where $\hat{\vect{x}}$, $\hat{\vect{y}}$, and $\hat{\vect{z}}$ are three orthonormal vectors, and $(x,y,z)$ are its Cartesian coordinates.
The length of $\vect{r}$ is $\|\vect{r}\| = \sqrt{x^2 + y^2 + z^2}$ and $\hat{\vect{r}} = \vect{r}/\|\vect{r}\|$ is the unit vector. 

\section{Preliminaries} \label{sec:scattering_model}

Consider the two parallel and $z$-oriented planar arrays depicted in Fig.~\ref{fig:propagation}, which span the rectangular regions $\mathcal{S} \subset \Real^2$ and $\mathcal{R} \subset \Real^2$ of $xy$-dimensions ${L_{S,x},L_{S,y}}$ and ${L_{R,x},L_{R,y}}$, respectively.\footnote{Volumetric arrays do not offer extra degrees of freedom (DoF) over planar arrays \cite{PizzoSPAWC20}.} The transmit array is equipped with $N_S$ antenna elements while the receive array has $N_R$ antennas.
Wave propagation takes place in the form of monochromatic scalar waves (i.e., with no polarization), at radiation frequency $\omega$ (corresponding to a wavelength $\lambda$), in a 3D scattered, homogeneous, and infinite medium. 
We assume that there is no direct path due to the presence of scatterers, which are homogeneous and made up of arbitrary shape and size. 

\begin{figure}[t!]
    \centering
     \includegraphics[width=1.\columnwidth]{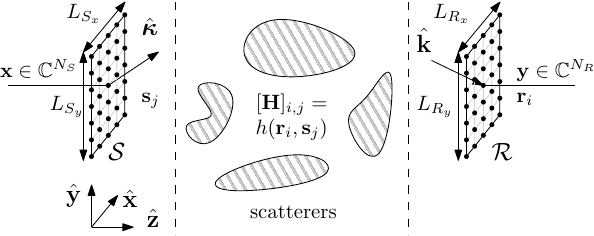} 
       \caption{{\small MIMO communications system under arbitrary scattering. }
       }
   \label{fig:propagation}
\end{figure}

\subsection{Non Line-of-Sight MIMO communications}

At any symbol time, the MIMO system in Fig.~\ref{fig:propagation} is described by the following discrete-space model (e.g.,~\cite{heath_lozano_2018}):
\begin{equation}\label{eq:MIMO_channel}
{\bf y} = {\bf H}{\bf x} + {\bf n}
\end{equation} 
where ${{\bf y}\in \mathbb{C}^{N_R}}$ and ${{\bf x}\in \mathbb{C}^{N_S}}$ denote the received and transmitted signal vectors, respectively. Also, ${{\bf n}\in \mathbb{C}^{N_R}}$ accounts for thermal noise that is distributed as ${\vect{n} \sim\CN({\bf 0},\sigma^2 {\bf I}_{N_R})}$. Here, the entry $[{\bf H}]_{ij}$ represents the propagation coefficient between the $j$th transmitting antenna located at point $\vect{s}_j$ and the $i$th receiving antenna located at point $\vect{r}_i$. In non line-of-sight communications, the entries $[\vect{H}]_{ij}$ are typically modeled as a stationary sequence of circularly-symmetric, complex-Gaussian and correlated random variables. As a consequence, ${\vect{H} \in \Complex^{N_R \times N_S}}$ is a correlated Rayleigh fading channel matrix, uniquely described by its spatial correlation matrix ${{\bf R}\in \Complex^{N_R N_S \times N_RN_S}}$ as
\begin{equation}\label{full_correlation_matrix}
{\bf R}= \mathbb{E}\{{\rm vec({\bf H})}{\rm vec}({\bf H})^{\Htran}\}.
\end{equation}
The classical approach is to develop physically-meaningful models for ${\bf R}$ from which realizations of $\vect{H}$ are then obtained. Differently, this paper builds upon \cite{PizzoIT21} that, starting from first electromagnetic principles of wave propagation, provides a \emph{Fourier plane-wave representation} of $h({\bf r},{\bf s})$, namely the random channel response at point ${\bf r}$ due to a unit impulse (point source) at point ${\bf s}$. The MIMO channel matrix in \eqref{eq:MIMO_channel} is obtained by sampling $h({\bf r},{\bf s})$ at ${\bf s}_j = [{s}_{x_j},{s}_{y_j}, s_z]^T$ and ${\bf r}_i = [{r}_{x_i},{r}_{y_i}, r_z]^T$ for $j=1,\ldots,N_S$ and $i=1,\ldots,N_R$ as
\begin{equation}\label{eq:MIMO_entries}
[{\bf H}]_{ij} = h({\bf r}_i,{\bf s}_j)
\end{equation} 
for any fixed pair $(r_z,s_z)$.
Notice that the plane-wave representation of $h({\bf r},{\bf s})$ is asymptotically exact as $\min(L_{S,x},L_{S,y})/\lambda \to \infty$ and $\min(L_{R,x},L_{R,y})/\lambda \to \infty$ jointly \cite{PizzoIT21}. The key results from \cite{PizzoIT21} are reviewed next as they are instrumental for Section~\ref{sec:Fourier_series}, where a novel \emph{Fourier plane-wave series expansion} is derived that well approximates $h({\bf r},{\bf s})$ within $(r_x,r_y) \in \mathcal{R}$ and $(s_x,s_y) \in \mathcal{S}$ when $\min(L_{S,x},L_{S,y})/\lambda \gg 1$ and $\min(L_{R,x},L_{R,y})/\lambda \gg 1$ jointly.

\begin{remark}
The wireless channel is composed of large-scale fading and small-scale fading. The former occurs on a larger scale --- a few hundred wavelengths --- and is due to pathloss, shadowing, and antenna gains, while the latter is a microscopic effect caused by small variations in the propagation. 
If the array size at both ends does not exceed the size of the local scattering neighbourhood, the two components can be modelled independently (e.g.,~\cite[Sec.~3.6]{heath_lozano_2018}). 
This paper only considers the small-scale fading. Any large-scale fading model can be applied verbatim.  
\end{remark}

\subsection{Fourier plane-wave representation of electromagnetic channels}

Once the reactive propagation mechanisms taking place in the proximity of source and scatterers (i.e., at a distance of few wavelengths) are excluded, the channel response $h(\vect{r},\vect{s})$ measured between two infinitely large planar arrays can be modeled as a \emph{spatially-stationary} electromagnetic random field~\cite{PizzoIT21}. For this class of channels, we can always find an \emph{exact} statistical representation of $h(\vect{r},\vect{s})$ in terms of {plane waves} that is given by the four-dimensional (4D) Fourier plane-wave representation~\cite{PizzoIT21}.
In particular, when the scatterers are confined entirely within the region separating source and receiver, 
\begin{align}\notag
& h(\vect{r},\vect{s})  =   \frac{1}{(2\pi)^2} \iiiint_{\mathcal{D}(\kappa)\times \mathcal{D}(\kappa)} a_R(k_x,k_y,\vect{r}) \\& \hspace{.7cm} \times
H_a(k_x,k_y,\kappa_x,\kappa_y)  a_S(\kappa_x,\kappa_y,\vect{s}) \, dk_xdk_y d\kappa_xd\kappa_y\label{Fourier_planewave}
\end{align}
and it is thus decomposed into three terms.
The first term $a_S(\kappa_x,\kappa_y,\vect{s})$ is the \emph{source response} that maps the impulsive excitation current at point $\vect{s}$ to the {transmit} propagation direction $\hat \ktx = \ktx/||\ktx||$ of the {transmitted} field. The second term $a_R(k_x,k_y,\vect{r})$ is the \emph{receive response} that maps the receive propagation direction $\hat \krx = \krx/||\krx||$ of the receive field to the induced current at point $\vect{r}$. They are defined as
\begin{align} \label{plane-wave-tx}
a_S(\kappa_x,\kappa_y,\vect{s}) & = e^{-\imagunit  \ktx ^{\Ttran} \vect{s}} = e^{-\imagunit \big( \kappa_x s_x + \kappa_y s_y + \gamma(\kappa_x,\kappa_y) s_z \big)} \\ \label{plane-wave-rx}
a_R(k_x,k_y,\vect{r}) & = e^{\imagunit  \krx ^{\Ttran} \vect{r}} = e^{\imagunit \big( k_x r_x + k_y r_y + \gamma(k_x,k_y) r_z\big)} 
\end{align}
where $\ktx = \kappa_x \hat{\vect{x}} + \kappa_y \hat{\vect{y}} + \gamma(\kappa_x,\kappa_y) \hat{\vect{z}}$ and $\krx = k_x \hat{\vect{x}} + k_y \hat{\vect{y}} + \gamma(k_x,k_y) \hat{\vect{z}}$ are the corresponding wave vectors with
\begin{equation} \label{gamma}
\gamma(k_x,k_y) = \sqrt{\kappa^2 - k_x^2 - k_y^2}
\end{equation}
given $\kappa=2\pi/\lambda$ as the wavenumber. The integration region in \eqref{Fourier_planewave} is limited to the support
\begin{equation}  \label{disk_T}
\mathcal{D}(\kappa) = \{ (k_x,k_y)\in\Real^2 : k_x^2 + k_y^2 \le \kappa^2\}
\end{equation}
given by a disk of radius $\kappa$. As a result, $\gamma(\cdot,\cdot)$ is always real-valued and the representation in \eqref{Fourier_planewave} involves propagating {plane waves} only.
Note that this is due to the spatial stationarity of $h(\vect{r},\vect{s})$ that reveals the low-pass filter behavior of the electromagnetic channel.
The third term $H_a(k_x,k_y,\kappa_x,\kappa_y)$ in~\eqref{Fourier_planewave} is the \emph{angular response} that maps every source direction $\hat\ktx$ onto every receive direction $\hat\krx$. 
Its statistical structure is given in the following theorem.

\begin{theorem} \cite{PizzoIT21} \label{th:stationary}
If $h(\vect{r},\vect{s})$ is a spatially-stationary, circularly-symmetric and complex-Gaussian random field, the angular response $H_a(k_x,k_y,\kappa_x,\kappa_y)$ is of the form 
\begin{equation} \label{angular_response_stationary}
\!\!H_a(k_x,k_y,\kappa_x,\kappa_y) =   \frac{A(k_x,k_y,\kappa_x,\kappa_y)W(k_x,k_y,\kappa_x,\kappa_y)}{\gamma^{1/2}(k_x,k_y) \gamma^{1/2}(\kappa_x,\kappa_y)}  
\end{equation}
where $A(k_x,k_y,\kappa_x,\kappa_y)$ is an arbitrary non-negative function (called spectral factor) and $W(k_x,k_y,\kappa_x,\kappa_y)$ is a collection of unit-variance, independent and identically distributed (i.i.d.) circularly-symmetric and complex-Gaussian random variables, i.e., $W(k_x,k_y,\kappa_x,\kappa_y) \sim \CN(0,1)$.
\end{theorem}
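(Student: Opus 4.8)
The plan is to characterize $H_a$ directly from its covariance, exploiting that~\eqref{Fourier_planewave} is a linear representation of $h$ so that Gaussianity and circular symmetry of the field pass through to the angular response. First I would form the spatial autocorrelation $\Ex\{h(\vect{r}_1,\vect{s}_1)h^*(\vect{r}_2,\vect{s}_2)\}$ by substituting~\eqref{Fourier_planewave} for each factor, interchanging expectation and integration, and using the exponential kernels~\eqref{plane-wave-tx}--\eqref{plane-wave-rx}. This leaves an integral over two copies of $\mathcal{D}\times\mathcal{D}$, weighted by the second-order spectral correlation $\Ex\{H_a(\krx_1,\ktx_1)H_a^*(\krx_2,\ktx_2)\}$ of the angular response, where I denote the arguments through their projected pairs $(k_x,k_y)$ and $(\kappa_x,\kappa_y)$.

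The central step is to impose spatial stationarity: the autocorrelation must depend on $\vect{r}_1,\vect{r}_2$ only through $\vect{r}_1-\vect{r}_2$ and on $\vect{s}_1,\vect{s}_2$ only through $\vect{s}_1-\vect{s}_2$. Because the receive and source kernels are the complex exponentials $e^{\imagunit\krx^{\Ttran}\vect{r}}$ and $e^{-\imagunit\ktx^{\Ttran}\vect{s}}$, this is possible only if the spectral correlation collapses onto Dirac deltas, namely $\Ex\{H_a(\krx_1,\ktx_1)H_a^*(\krx_2,\ktx_2)\} = S(\krx_1,\ktx_1)\,\delta(\krx_1-\krx_2)\,\delta(\ktx_1-\ktx_2)$ for a nonnegative spectral density $S$, with two-dimensional deltas in the projected wavenumbers. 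This is the Bochner/Wiener--Khinchin statement that a stationary field has uncorrelated spectral increments. Since $H_a$ is a linear functional of the complex-Gaussian, circularly-symmetric field $h$, it is itself complex-Gaussian and circularly symmetric; circular symmetry kills the pseudo-covariance, so delta-uncorrelatedness upgrades to mutual independence and $H_a$ is fully specified by $S$. Factoring $H_a = \sqrt{S}\,W$ with $W\sim\CN(0,1)$ i.i.d.\ then isolates the random part, and the remaining freedom of the scattering environment is carried entirely by the nonnegative $A$ inside $\sqrt{S}$.

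The last and most delicate step is to pin down the universal prefactor $\frac{\kappa\eta}{2}\,\gamma^{-1/2}(k_x,k_y)\,\gamma^{-1/2}(\kappa_x,\kappa_y)$, i.e.\ to show that $S = \tfrac{\kappa^2\eta^2}{4}\,A^2/(\gamma(k_x,k_y)\gamma(\kappa_x,\kappa_y))$ with $A$ arbitrary. Because the physical field satisfies the Helmholtz equation, its spectral content lives on the sphere $\|\krx\|=\kappa$ rather than on the flat disk $\mathcal{D}$; parametrizing that sphere by its projection $(k_x,k_y)$ gives the surface element $\kappa/\gamma(k_x,k_y)\,dk_xdk_y$, so the spectral measure that is ``white'' on the sphere acquires a $1/\gamma$ Jacobian weight on $\mathcal{D}$ at each of the source and receive ends (equivalently, this is the $1/\gamma$ factor in the Weyl angular-spectrum expansion of the scalar Green's function). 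Since $H_a$ is the amplitude whose squared magnitude is $S$, each $1/\gamma$ enters under a square root, producing the two $\gamma^{-1/2}$ factors, while the residual constant $\kappa\eta/2$ (with $\eta$ the medium impedance) is fixed by the normalization of that Green's function. The main obstacle I anticipate is rigor rather than algebra: the ``i.i.d.\ $W$'' and the Dirac deltas must really be handled through an orthogonal stochastic measure and an It\^o-type isometry, and one must verify that transferring the white measure from the curved sphere to $\mathcal{D}$ is \emph{consistent}, so that the clean product form with separate source and receive $\gamma^{-1/2}$ factors genuinely holds.
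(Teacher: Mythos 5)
The paper does not prove this statement: Theorem~\ref{th:stationary} is imported verbatim from \cite[Th.~2]{PizzoIT21} with no argument given here, so the only fair comparison is with the proof in that reference. Your reconstruction follows essentially the same route as the cited source: (i) substitute the plane-wave representation into the autocorrelation and invoke stationarity to force the spectral correlation onto the diagonal (Cram\'er/Wiener--Khinchin orthogonal-increment argument), (ii) use joint Gaussianity plus circular symmetry to upgrade delta-uncorrelatedness to independence and write $H_a=\sqrt{S}\,W$, and (iii) obtain the $\gamma^{-1/2}$ factors from the Jacobian $\kappa/\gamma\,dk_xdk_y$ of parametrizing the Helmholtz sphere $\|\vect{k}\|=\kappa$ by its projection onto $\mathcal{D}$ (equivalently, the Weyl expansion of the scalar Green's function). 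All the essential ideas are present and in the right order.

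Two small remarks. First, you spend your ``most delicate step'' on pinning down the constant $\kappa\eta/2$, but since $A$ is declared \emph{arbitrary} nonnegative, this constant is a normalization convention rather than something the theorem forces --- the paper itself says in the footnote to \eqref{psd_4d} that $(\kappa\eta/2)^2$ can be absorbed into the spectral factor; what genuinely must be extracted is only the $1/\gamma$ weight at each end, which you do correctly. Second, your ``only if'' direction (stationarity $\Rightarrow$ delta-correlated spectrum) implicitly uses injectivity of the Fourier transform on the relevant class of spectral measures; you flag the need for an orthogonal stochastic measure and an isometry argument, which is exactly the rigor gap one would have to close, so this is an honest and accurate self-assessment rather than an error.
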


Plugging~\eqref{angular_response_stationary} into~\eqref{Fourier_planewave} generates a stationary random field that converges in the {mean-squared-error} sense to $h(\vect{r},\vect{s})$ for any channel with bounded spectral factor $A(k_x,k_y,\kappa_x,\kappa_y)$ \cite{PizzoIT21}.  It provides a second-order characterization of the channel response in terms of statistically independent complex-Gaussian random coefficients.
In fact, \eqref{Fourier_planewave} is directly connected to the Fourier spectral representation of a spatial random field.

\begin{remark} \label{remark_decoupling} 
The series expansion of the channel in~\eqref{Fourier_planewave} leads to the decoupling of array geometry and scattering, in line with previous works on channel modelling that rely on plane-wave decompositions, e.g., 
\cite{PoonCapacity,Sayeed2002,Veeravalli,PoonDoF,Kennedy2007,Pollock,LucaBook}. While the former is a deterministic effect represented by the source and receive responses, the latter is a stochastic effect that is entirely embedded into the angular response in \eqref{angular_response_stationary}.
In Section~\ref{sec:MIMO_channel_model}, this property will lead to a MIMO channel model whose spatial correlation matrix has a decoupled structure, thereby enabling the design of array configurations and signal processing algorithms separately.
\end{remark}

\subsection{Physical considerations}

The angular response $H_a(k_x,k_y,\kappa_x,\kappa_y)$ describes the channel coupling between every pair of source $\hat\ktx$ and receive $\hat\krx$ propagation directions. We may rewrite~\eqref{angular_response_stationary} as
\begin{align}  \label{scattering_response_nlos}
 \!\!\!\!\!\! H_a(k_x,k_y,\kappa_x,\kappa_y) \!=\!  S^{1/2}(k_x,k_y,\kappa_x,\kappa_y) W(k_x,k_y,\kappa_x,\kappa_y)  \!\!
\end{align}
where $S(k_x,k_y,\kappa_x,\kappa_y)$ is a non-negative function defined as
\begin{equation} \label{psd_4d}
\!\!S(k_x,k_y,\kappa_x,\kappa_y)  =\frac{A^2(k_x,k_y,\kappa_x,\kappa_y)}{\gamma(k_x,k_y) \gamma(\kappa_x,\kappa_y)}.
\end{equation}
Plugging~\eqref{scattering_response_nlos} into~\eqref{Fourier_planewave} the average channel power $P = \Ex\{|h(\vect{r},\vect{s})|^2\} <\infty$ is~\cite{PizzoIT21}:
 \begin{align}\label{power}
P & = \frac{1}{(2\pi)^4} \iiiint_{-\infty}^\infty \! S(k_x,k_y,\kappa_x,\kappa_y)  \, d\kappa_x d\kappa_y dk_x dk_y
\end{align}
where $S(k_x,k_y,\kappa_x,\kappa_y)$ represents the \emph{bandlimited} 4D power spectral density of $h(\vect{r},\vect{s})$.
If $h(\vect{r},\vect{s})$ is assumed to have unit average power, then $S(k_x,k_y,\kappa_x,\kappa_y)$ can be regarded as a \emph{continuous angular power distribution} of the channel, which specifies the power transfer between every pair of {transmit} $\hat\ktx$ and receive $\hat\krx$ propagation directions, on average. From~\eqref{psd_4d}, it follows that it is fully described by the spectral factor $A(k_x,k_y,\kappa_x,\kappa_y)$ that physically accounts for the angular selectivity of the scattering. This function uniquely parametrizes the channel model and should be chosen to fit a prescribed class of propagation environments (e.g., through channel measurements). In the special case of isotropic scattering, the spectral factor is constant over its domain, i.e., $A(k_x,k_y,\kappa_x,\kappa_y) = A(\kappa)$, as the transfer of power is uniformly distributed over all propagation directions \cite{PizzoJSAC20,PizzoIT21}.
Under non-isotropic scattering, the spectral factor is not constant and the bandwidth of $h(\vect{r},\vect{s})$ is determined by the support of $S(k_x,k_y,\kappa_x,\kappa_y)$ in~\eqref{psd_4d}, as summarized next.

\begin{corollary}\cite{PizzoJSAC20,PizzoIT21,PizzoTSP21} \label{th:bandlimited}
$h(\vect{r},\vect{s})$ is bandlimited with maximum circular bandwidth $|\mathcal{D}(\kappa)| =\pi\kappa^2$, achieved under isotropic scattering conditions.
\end{corollary}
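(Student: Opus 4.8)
The plan is to identify the ``bandwidth'' of $h(\vect{r},\vect{s})$ with the support of its power spectral density $S(k_x,k_y,\kappa_x,\kappa_y)$ in \eqref{psd_4d}, and then to show that this support is confined to a centered disk of radius $\kappa$ in each of the two spatial-frequency planes. Since $h(\vect{r},\vect{s})$ is spatially stationary in both the source coordinate $\vect{s}$ and the receive coordinate $\vect{r}$, its second-order behaviour is completely captured by $S$, which by \eqref{power} plays the role of a bandlimited 4D power spectral density. The relevant bandwidth is therefore the Lebesgue measure of the projection of $\mathrm{supp}(S)$ onto, say, the receive frequency plane $(k_x,k_y)$.

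First I would argue that $\mathrm{supp}(S)\subseteq \mathcal{D}\times\mathcal{D}$. The denominator of \eqref{psd_4d} contains the factor $\gamma(k_x,k_y)=\sqrt{\kappa^2-k_x^2-k_y^2}$ defined in \eqref{gamma}, which is real and positive only when $(k_x,k_y)$ lies inside the disk $\mathcal{D}$ of \eqref{disk_T}; outside $\mathcal{D}$ one has $k_x^2+k_y^2>\kappa^2$, so the corresponding wave vector $\krx$ acquires an imaginary $\hat{\vect{z}}$-component and the plane wave $a_r(\krx,\vect{r})$ becomes evanescent. Such non-propagating components are excluded from the representation \eqref{Fourier_planewave}, whose integration is restricted to $\mathcal{D}\times\mathcal{D}$. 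The same argument applies verbatim to the source frequencies $(\kappa_x,\kappa_y)$. Consequently $S$ vanishes whenever $(k_x,k_y)\notin\mathcal{D}$ or $(\kappa_x,\kappa_y)\notin\mathcal{D}$, and the support in each plane is contained in the disk $\mathcal{D}$.

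This establishes that $h(\vect{r},\vect{s})$ is \emph{circularly}-bandlimited, the ``circular'' qualifier being inherited from the disk geometry of $\mathcal{D}$, and that the bandwidth cannot exceed $|\mathcal{D}|=\pi\kappa^2$. To show that this upper bound is attained, I would invoke the isotropic case. Since the spectral factor $A$ enters \eqref{psd_4d} only through $A^2$ and $\gamma>0$ throughout the interior of $\mathcal{D}$, the support of $S$ coincides with $\mathcal{D}\times\mathcal{D}$ precisely when $A$ is nonzero almost everywhere on $\mathcal{D}$. Isotropic scattering corresponds to a constant spectral factor $A(k_x,k_y,\kappa_x,\kappa_y)=A(\kappa)$, as noted just below \eqref{power}, which distributes power uniformly over all propagation directions and hence makes $S$ strictly positive over the whole disk; the support then equals $\mathcal{D}$ and the bandwidth equals $\pi\kappa^2$.

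I do not expect a genuine obstacle here, as the result follows directly from the spectral structure already fixed by Theorem~\ref{th:stationary} together with the disk support in \eqref{disk_T}. The only point requiring care is the precise reading of ``achieved under isotropic scattering'': a constant $A$ is a \emph{sufficient} condition for attaining the full bandwidth, and I would phrase the achievability accordingly, remarking that isotropic scattering is the canonical condition that fills the entire disk uniformly rather than the unique one that attains $|\mathcal{D}|$.
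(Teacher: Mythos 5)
Your proposal is correct and follows essentially the same reasoning the paper relies on: the corollary is cited from \cite{PizzoIT21} and justified in the surrounding text by noting that the bandwidth of $h(\vect{r},\vect{s})$ is the support of $S(k_x,k_y,\kappa_x,\kappa_y)$ in \eqref{psd_4d}, which is confined to $\mathcal{D}\times\mathcal{D}$ because the representation \eqref{Fourier_planewave} retains only propagating plane-waves, and which fills the whole disk when the spectral factor is constant (isotropic scattering). Your closing caveat — that a constant $A$ is sufficient but not the unique way to attain $|\mathcal{D}|=\pi\kappa^2$ — is a fair and accurate refinement of the statement's wording.
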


As derived in~\cite[Lemma~2]{PizzoJSAC20}, under isotropic propagation conditions, the correlation function between two antennas at a distance $r$ yields the well-known Clarke's isotropic $\sinc(2r/\lambda)$ correlation. This shows that~\eqref{Fourier_planewave} embraces existing models and proves its asymptotic validity, since the Clarke's model is exact under isotropic propagation~\cite{PizzoJSAC20}.

\section{Fourier plane-wave series of stochastic electromagnetic channels} \label{sec:Fourier_series}

For any fixed pair $(r_z,s_z)$, the source and received {plane waves} in \eqref{plane-wave-tx} and \eqref{plane-wave-rx} correspond to two phase-shifted versions of {two-dimensional (2D)} spatial-frequency Fourier harmonics.
Intuitively, the transition from the Fourier plane-wave representation to a Fourier plane-wave series expansion is analogous to the Fourier integral-Fourier series transition for time-domain signals.
This is the main result of this section, which is summarized in Theorem~\ref{th:series_expansion} below and provides us with an approximation of \eqref{Fourier_planewave} for planar arrays of finite extent. The approximation is accurate in the large array regime, as summarized next.

\begin{assumption}\label{Assumption1}Arrays are {electromagnetically large} such that $\min(L_{S,x},L_{S,y})/\lambda\gg 1$ and $\min(L_{R,x},L_{R,y})/\lambda\gg 1$.
\end{assumption}
The above assumption does not require the arrays to be ``physically large'', but rather their normalized size (compared to the wavelength). This is analogous to the Nyquist-Shannon sampling theorem for bandlimited waveform channels $h(t)$ of bandwidth $B$, observed over time interval $[0,T]$ (e.g.,~\cite{FranceschettiBook}). For this class of channels, we can approximate $h(t)$ as a linear combination of a finite number of cardinal basis functions with coefficients collected inside $[0,T]$ and equally spaced by $1/2B$. The approximation error within $[0,T]$ becomes negligible as ${B T \gg 1}$ and is zero at ${B T\to\infty}$.

\subsection{Main result}

With a slight abuse of notation, we call 
\begin{align} \label{plane_wave_discrete_tx}
\!\!a_S(m_x,m_y,\vect{s}) & = e^{\!\!-\imagunit \Big(\!\frac{2\pi}{L_{S,x}} m_x s_x + \frac{2 \pi}{L_{S,y}} m_ys_y + \gamma_S(m_x,m_y) s_z\! \Big)} \!\!\\ \label{plane_wave_discrete_rx}
a_R(\ell_x,\ell_y,\vect{r}) & = e^{\imagunit \Big(\frac{2\pi}{L_{R,x}} \ell_xr_x + \frac{2 \pi}{L_{R,y}} \ell_yr_y + \gamma_R(\ell_x,\ell_y) r_z\Big)}
\end{align}
the discretized {plane waves} obtained by evaluating $(\kappa_x,\kappa_y)$ at $({2\pi m_x}/L_{S,x},{2\pi m_y}/L_{S,y})$ and $(k_x,k_y)$ at $({2\pi \ell_x}/L_{R,x},{2\pi \ell_y}/L_{R,y})$, respectively. Consequently, 
 \begin{align} \label{gamma_s}
 \gamma_S(m_x,m_y)  &= \gamma\left(\frac{2\pi}{L_{S,x}} m_x,\frac{2\pi}{L_{S,y}} m_y\right) \\\label{gamma_r}
 \gamma_R(\ell_x,\ell_y) &= \gamma\left(\frac{2\pi}{L_{R,x}} \ell_x,\frac{2\pi}{L_{R,y}} \ell_y\right)
 \end{align}
 where $\gamma(\cdot,\cdot)$ is given by \eqref{gamma}.
Since the angular response $H_a(k_x,k_y,\kappa_x,\kappa_y)$ is non-zero only within the support $(k_x,k_y,\kappa_x,\kappa_y)\in\mathcal{D}(\kappa)\times\mathcal{D}(\kappa)$, the discretized {plane waves} in \eqref{plane_wave_discrete_tx} and \eqref{plane_wave_discrete_rx} are defined within the lattice ellipses (e.g. \cite[Fig. 1]{PizzoSPAWC20})
\begin{align}\label{epsilon_s}
\hspace{-0.4cm}\mathcal{E}_S &= \left\{(m_x,m_y)\!\in\!\mathbb{Z}^2 \!:\! \left(\frac{m_x \lambda}{L_{S,x}}\right)^2 \!+ \!\left(\frac{m_y \lambda}{L_{S,y}}\right)^2 \le 1\right\}\!\!
\\\label{epsilon_r}
\hspace{-0.4cm}\mathcal{E}_R &= \left\{(\ell_x,\ell_y)\!\in\!\mathbb{Z}^2 \!:\! \left(\frac{\ell_x \lambda}{L_{R,x}}\right)^2 + \left(\frac{\ell_y \lambda}{L_{R,y}}\right)^2 \!\le 1\right\}\!
\end{align}
at source and receiver, respectively.
We call $n_S = |\mathcal{E}_S|$ and $n_R=|\mathcal{E}_R|$ the cardinalities of the sets $\mathcal{E}_S$ and $\mathcal{E}_R$, respectively. These are given by~\cite{PizzoSPAWC20,PizzoTSP21}
\begin{align}\label{eq:ns}
n_S &=  \Big\lceil{\frac{\pi}{\lambda^2}L_{S,x}L_{S,y}}\Big\rceil + o\left( \frac{L_{S,x}L_{S,y}}{\lambda^2}\right) \\\label{eq:nr}
n_R &= \Big\lceil{\frac{\pi}{\lambda^2}L_{R,x}L_{R,y}}\Big\rceil + o\left( \frac{L_{R,x}L_{R,y}}{\lambda^2}\right)
\end{align}
where $o(\cdot)$ terms can be neglected under Assumption~\ref{Assumption1}. 
With the above definitions at hand, Theorem~\ref{th:series_expansion} is given.

\begin{theorem} \label{th:series_expansion}[Fourier plane-wave series expansion]
For any $s_z$ and $r_z>s_z$, $h(\vect{r},\vect{s})$ within $(r_x,r_y)\in \mathcal{R}$ and $(s_x,s_y)\in \mathcal{S}$ can approximately be described by
\begin{align} \notag
&h(\vect{r},\vect{s}) =  \mathop{\sum}_{(\ell_x,\ell_y)\in\mathcal{E}_R} \mathop{ \sum}_{(m_x,m_y)\in \mathcal{E}_S}  a_R(\ell_x,\ell_y,\vect{r}) H_a(\ell_x,\ell_y,m_x,m_y)   \\& \hspace{.6cm} a_S(m_x,m_y,\vect{s}) \label{Fourier_series}
\end{align}
with random Fourier coefficients 
\begin{equation} \label{Fourier_coeff}
H_a(\ell_x,\ell_y,m_x,m_y)\sim\CN\Big(0,\sigma^2(\ell_x,\ell_y,m_x,m_y)\Big)
\end{equation}
which are statistically independent, circularly-symmetric, complex-Gaussian random variables, each having variance 
\begin{align}\notag
 \sigma^2 &(\ell_x,\ell_y,m_x,m_y) = \\&\frac{1}{(2\pi)^4} \iiiint_{\mathcal{W}_S(m_x,m_y) \times\mathcal{W}_R(\ell_x,\ell_y)} \hspace{-2.5cm}
  S(k_x,k_y,\kappa_x,\kappa_y)  \, dk_xdk_y d\kappa_xd\kappa_y\label{variances}
\end{align}
where the sets $\mathcal{W}_S(m_x,m_y) $ and $\mathcal{W}_R(\ell_x,\ell_y)$ are defined as
\begin{align}\label{S_s}
  \!&\left\{\!\Big[\frac{2\pi m_x}{L_{S,x}},\frac{2\pi (m_x\!+\!1)}{L_{S,x}}\Big] \!\times \!\Big[\frac{2\pi m_y}{L_{S,y}},\frac{2\pi (m_y\!+\!1)}{L_{S,y}}\Big] \!\right\}\!\\\label{S_r}
 \!&\left\{\Big[\frac{2\pi \ell_x}{L_{R,x}},\frac{2\pi (\ell_x+1)}{L_{R,x}}\Big] \!\times\! \Big[\frac{2\pi \ell_y}{L_{R,y}},\frac{2\pi (\ell_y+1)}{L_{R,y}}\Big]\right\}
\end{align}
and are {determined by the $xy$-dimensions of the arrays.}
\end{theorem}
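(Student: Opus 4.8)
The plan is to obtain \eqref{Fourier_series} as a Nyquist-type discretization of the exact integral \eqref{Fourier_planewave}, interpreting the result as the asymptotic Karhunen--Loève expansion of the bandlimited, spatially-stationary field $h(\vect{r},\vect{s})$. The starting observation is that, for fixed $s_z$ and $r_z$, the source and receive responses $a_s(\ktx,\vect{s})$ and $a_r(\krx,\vect{r})$ are, up to the deterministic $z$-phases $\gamma s_z$ and $\gamma r_z$, two-dimensional spatial-frequency harmonics in the transverse coordinates. Restricting attention to the finite apertures $V_S$ and $V_R$ of transverse extents $L_{S,x}\times L_{S,y}$ and $L_{R,x}\times L_{R,y}$, the natural sampling of the transverse wavevectors $(\kappa_x,\kappa_y)$ and $(k_x,k_y)$ is on the lattices of spacing $2\pi/L_{S,x},2\pi/L_{S,y}$ and $2\pi/L_{R,x},2\pi/L_{R,y}$; since Corollary~\ref{th:bandlimited} confines the spectrum to the disk $\mathcal{D}$, only the lattice points inside the ellipses $\mathcal{E}_s$ and $\mathcal{E}_r$ survive. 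This is the spatial analogue of the time--bandwidth argument sketched after Assumption~\ref{Assumption1}.

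Concretely, I would partition the 4D support $\mathcal{D}\times\mathcal{D}$ into the disjoint cells $\mathcal{S}_s(m_x,m_y)\times\mathcal{S}_r(\ell_x,\ell_y)$ of \eqref{S_s}--\eqref{S_r}, each of Lebesgue measure $(2\pi)^4/(L_{S,x}L_{S,y}L_{R,x}L_{R,y})$, and rewrite \eqref{Fourier_planewave} as a sum over cells of the corresponding sub-integrals. On each cell I would replace the continuous plane-waves $a_r(\krx,\vect{r})$ and $a_s(\ktx,\vect{s})$ by their values $a_r(\ell_x,\ell_y,\vect{r})$ and $a_s(m_x,m_y,\vect{s})$ at the lower corner of the cell, as defined in \eqref{plane_wave_discrete_rx}--\eqref{plane_wave_discrete_tx}, and factor them out of the integral. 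What remains inside the cell is precisely the random coefficient
\begin{equation}
H_a(\ell_x,\ell_y,m_x,m_y) = \frac{1}{(2\pi)^2}\iiiint_{\mathcal{S}_s\times\mathcal{S}_r} H_a(k_x,k_y,\kappa_x,\kappa_y)\,dk_x dk_y d\kappa_x d\kappa_y
\end{equation}
so that collecting the factored harmonics over all surviving cells yields the right-hand side of \eqref{Fourier_series}.

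The statistical claims would then follow from the white structure $H_a = S^{1/2}W$ of \eqref{scattering_response_nlos}. Each coefficient is a linear functional of the circularly-symmetric Gaussian white field $W$, hence itself circularly-symmetric complex-Gaussian; coefficients attached to distinct cells are integrals of $W$ over disjoint regions and are therefore uncorrelated, and joint Gaussianity upgrades this to independence, giving \eqref{Fourier_coeff}. The variance \eqref{variances} follows from $\Ex\{|H_a(\ell_x,\ell_y,m_x,m_y)|^2\}$ by invoking the delta-correlation of $W$ (the continuum counterpart of the unit-variance i.i.d. statement in Theorem~\ref{th:stationary}), which collapses the double cell-integral to the single integral of the power spectral density $S$ over the cell. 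The count \eqref{eq:ns} of surviving lattice points is the Gauss lattice-point estimate for the ellipses \eqref{epsilon_s}--\eqref{epsilon_r}, with remainder $o(L_{S,x}L_{S,y}/\lambda^2)$.

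The main obstacle is controlling the error hidden in the symbol $\approx$, namely justifying the replacement of the continuous plane-waves by their cell-corner values. This is \emph{not} a pointwise argument: over a single cell the transverse wavevector varies by $2\pi/L$, so across an aperture of extent $L$ the residual phase accumulates to order $2\pi$ near the array edges and is not uniformly small. The correct control comes instead from the fact that the discretized harmonics $\{a_s(m_x,m_y,\cdot)\}$ and $\{a_r(\ell_x,\ell_y,\cdot)\}$ are orthogonal over the rectangular apertures, so the truncated series is the orthogonal projection of $h$ onto their span; the residual mean-squared energy is governed by the spillover of the space--bandwidth product and vanishes as $\min(L_{S,x},L_{S,y})/\lambda\to\infty$ and $\min(L_{R,x},L_{R,y})/\lambda\to\infty$, exactly as in the Nyquist--Shannon approximation of a bandlimited waveform observed over $[0,T]$ as $BT\to\infty$. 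Making this mean-squared convergence quantitative --- equivalently, verifying that the coefficients become asymptotically uncorrelated and that \eqref{Fourier_series} is the asymptotic Karhunen--Loève expansion under Assumption~\ref{Assumption1} --- is where I would spend most of the effort, leaning on the spectral-representation results of \cite{PizzoIT21}.
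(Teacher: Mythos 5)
Your proposal follows essentially the same route as the paper's Appendix: partition $\mathcal{D}\times\mathcal{D}$ into the cells $\mathcal{S}_s(m_x,m_y)\times\mathcal{S}_r(\ell_x,\ell_y)$, pull the plane-waves out of each sub-integral (the paper invokes the first mean-value theorem where you evaluate at the cell corner), identify each coefficient as the projection of the white field $S^{1/2}W$ onto the cell, and deduce independence, circular symmetry, and the variance \eqref{variances} from the delta-correlation of $W$ over disjoint cells. Your closing discussion of the mean-squared error control under Assumption~\ref{Assumption1} is likewise consistent with what the paper defers to \cite{PizzoJSAC20,PizzoIT21}.
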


\begin{proof}
The proof {is given in Appendix and} is articulated in two parts.
The first part provides an approximation of~\eqref{Fourier_planewave} over a fixed set of directions. This extends the proof in \cite[App.~IV.A]{PizzoJSAC20} for the receiver only. 
The second part computes the variances of the Fourier coefficients in \eqref{Fourier_coeff} for a generic non-isotropic scenario, by extending \cite[App.~IV.C]{PizzoJSAC20}, valid only for isotropic scattering.
\end{proof}

The above theorem generates a periodic stationary random field that well approximates $h(\vect{r},\vect{s})$ over its fundamental period $(r_x,r_y) \in \mathcal{R}$ and $(s_x,s_y) \in \mathcal{S}$ through the Fourier plane-wave series expansion in \eqref{Fourier_series}.
For the source and receive arrays illustrated in Fig.~\ref{fig:propagation} of $xy$-dimensions ${L_{S,x},L_{S,y}}$ and ${L_{R,x},L_{R,y}}$, the spatial replicas generated by \eqref{Fourier_series} are non-overlapping; hence, the spatial aliasing condition is always satisfied. 
The periodic random field in~\eqref{Fourier_series} is decomposed into the fixed discretized {plane waves} in \eqref{plane_wave_discrete_tx} and \eqref{plane_wave_discrete_rx}, which are
weighted by the random coefficients $H_a(\ell_x,\ell_y,m_x,m_y)$.\footnote{Notice that~\eqref{Fourier_series} is  convergent {in the mean-squared-error sense} as $S(k_x,k_y,\kappa_x,\kappa_y)$ in \eqref{psd_4d} is singularly-integrable \cite{PizzoJSAC20,PizzoIT21}.}
The approximation error in~\eqref{Fourier_series} reduces as $\min(L_{S,x},L_{S,y})/\lambda$ and $\min(L_{R,x},L_{R,y})/\lambda$ become large \cite{PizzoJSAC20}, and vanishes to infinity -- when the two representations in \eqref{Fourier_planewave} and \eqref{Fourier_series} coincide. Hence, it provides an accurate channel description under Assumption~\ref{Assumption1}. To showcase that a good approximation is already achieved for the practical values (i.e., $\min(\cdot,\cdot)/\lambda \ge 10$) envisioned in future high frequency communications and large antenna array technologies, comparisons will be made in Section~\ref{sec:physical_model_variances} and Section~\ref{sec:capacity} with the Clarke's model that is \emph{exact} under isotropic propagation conditions. Similar observations can be found in~\cite{PizzoSPAWC20,PizzoTSP21}.

\subsection{Physical considerations} 

An intuitive physical interpretation of the Fourier plane-wave series expansion in Theorem~\ref{th:series_expansion} is obtained by comparing \eqref{Fourier_series} to its continuous counterpart in \eqref{Fourier_planewave}.
The continuum incident $a_S(\kappa_x,\kappa_y,\vect{s})$ and received $a_R(k_x,k_y,\vect{r})$ {plane waves} are replaced by their discretized versions $a_S(m_x,m_y,\vect{s})$ and $a_R(\ell_x,\ell_y,\vect{r})$, respectively. Physically, this implies that only a finite number of {plane waves} carries the essential channel information available between the two arrays of compact size. 
Consequently, the continuous angular response $H_a(k_x,k_y,\kappa_x,\kappa_y)$ is replaced by the sequence $\{H_a(\ell_x,\ell_y,m_x,m_y)\}$, which, as illustrated in Fig.~\ref{fig:angular_representation}(a), describe the channel coupling between every pair of source $\mathcal{W}_S(m_x,m_y)$ and receive $\mathcal{W}_R(\ell_x,\ell_y)$ angular sets pointed by the above discretized {plane waves}. For this reason, we call $\{H_a(\ell_x,\ell_y,m_x,m_y)\}$ the \emph{coupling coefficients} (e.g., \cite{Miller}). 
By inspection of \eqref{S_s} and \eqref{S_r}, the size of each angular set is inversely proportional to the array size. This is a well known property: spatially larger arrays have higher angular resolution~\cite{Sayeed2002}. Since the angular sets provide a non-overlapping partition of the support $\mathcal{D}(\kappa) \times \mathcal{D}(\kappa)$, it follows that 
\begin{equation} \label{power_variances}
P = \mathop{\sum}_{(\ell_x,\ell_y)\in\mathcal{E}_R} \mathop{ \sum}_{(m_x,m_y)\in \mathcal{E}_S} 
\sigma^2(\ell_x,\ell_y,m_x,m_y).
\end{equation}
Notice that the variances are always bounded since the integrand in \eqref{variances} is singularly-integrable \cite{PizzoIT21,PizzoJSAC20}. 
If $P$ is normalized to unity, then the variances $\sigma^2(\ell_x,\ell_y,m_x,m_y)$ can be regarded as a \emph{discrete angular power distribution} of the channel, which specifies the fraction of power that is transferred from $\mathcal{W}_S(m_x,m_y)$ to $\mathcal{W}_R(\ell_x,\ell_y)$. The strength of the coupling coefficients is not all equal and depends jointly by the array sizes and scattering mechanism. This joint combination determines two key factors: the \emph{number of parallel channels} and the \emph{level of diversity}. The former is specified by the number of source and receive angular sets that are coupled, and provide the DoF of the electromagnetic channel as elaborated in {Section~\ref{sec:KL_expansion}}. The latter is determined by the number of receive angular sets that couple with each {transmit} angular set. Indeed, radiating towards specific angular directions at the source may illuminate several angular directions at the receiver.

As an example, in Fig.~\ref{fig:angular_representation}(b) the coupling coefficients are arranged in matrix form. Three distinct angular sets are activated by the source (orange, blue, and green) whose radiated power is transferred to six angular sets at receiver. The number of parallel channels is three and the level of diversity is three (orange), two (blue), and one (green), respectively.
A possible way to measure the coupling coefficients and their strength from spatial realizations of the channel is described {in Section~\ref{sec:measurement_coupling_coeff}}. An analytical method for their modeling and computation is described next.

\begin{figure}
     \centering
     \begin{subfigure}[b]{\columnwidth}
         \centering
         \includegraphics[width=.95\columnwidth]{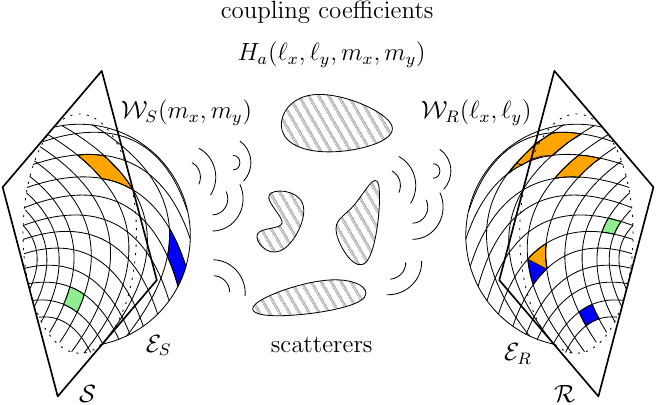}
         \caption{Coupling coefficients}
     \end{subfigure}
     \hfill\vspace{0.2cm}
     \begin{subfigure}[b]{\columnwidth}
         \centering
         \includegraphics[width=.65\columnwidth]{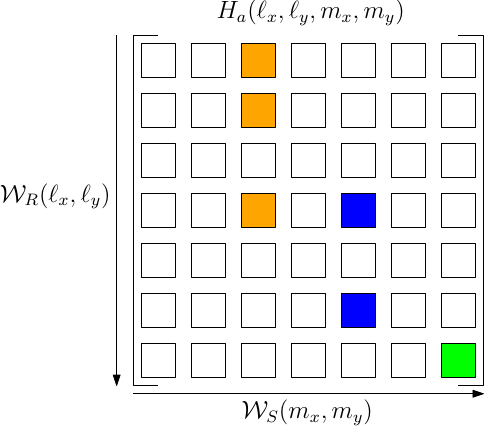}
         \caption{Parallel channels and diversity}
     \end{subfigure}
        \caption{{\small Physical interpretation of the Fourier plane-wave series expansion in Theorem~\ref{th:series_expansion}.}}
        \label{fig:angular_representation}
\end{figure}

\begin{figure}
     \centering
     \begin{subfigure}[b]{\columnwidth}
         \centering
         \includegraphics[width=.95\columnwidth]{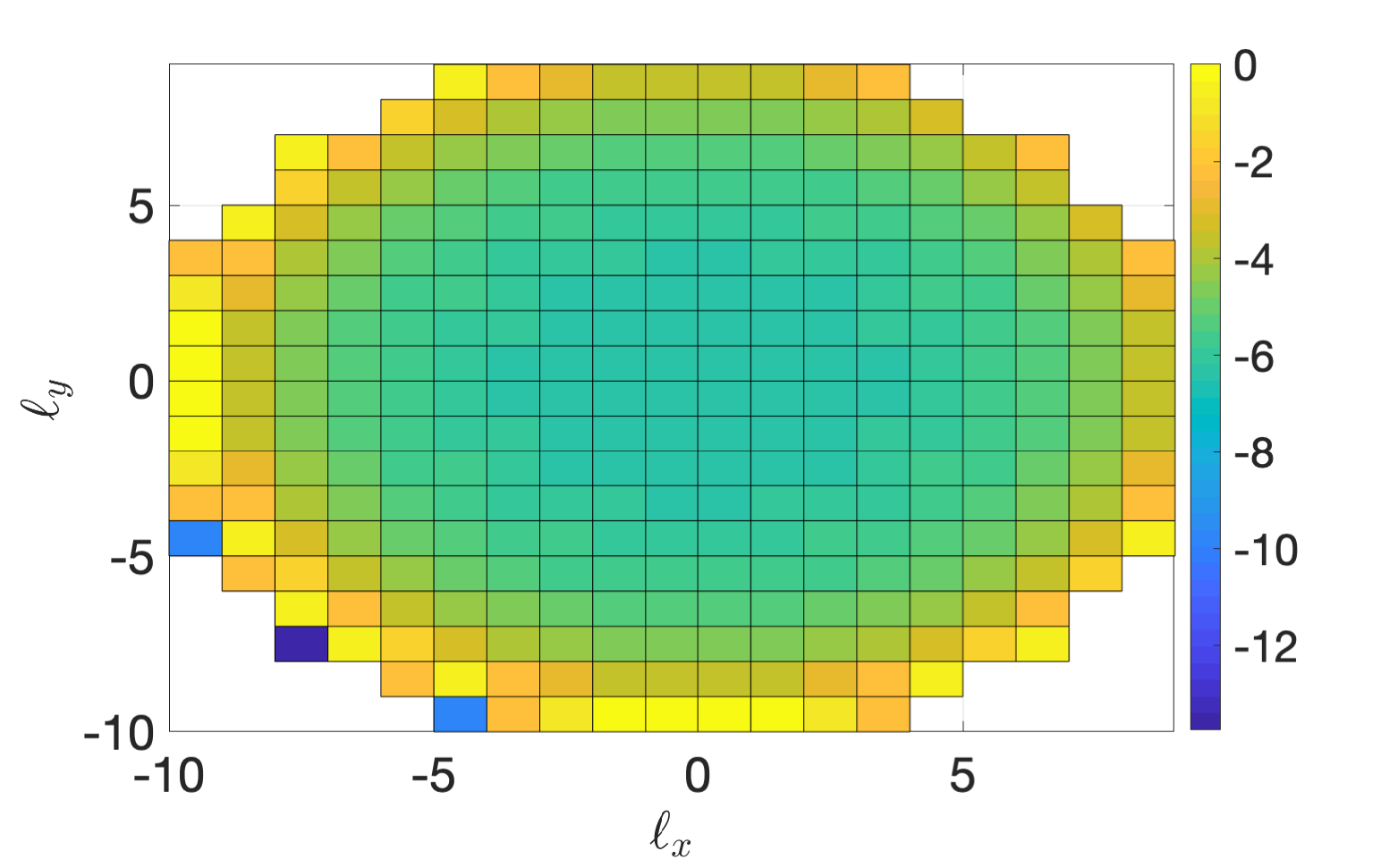}
         \caption{$L/\lambda=10$}
     \end{subfigure}
     \hfill
     \begin{subfigure}[b]{\columnwidth}
         \centering
         \includegraphics[width=.95\columnwidth]{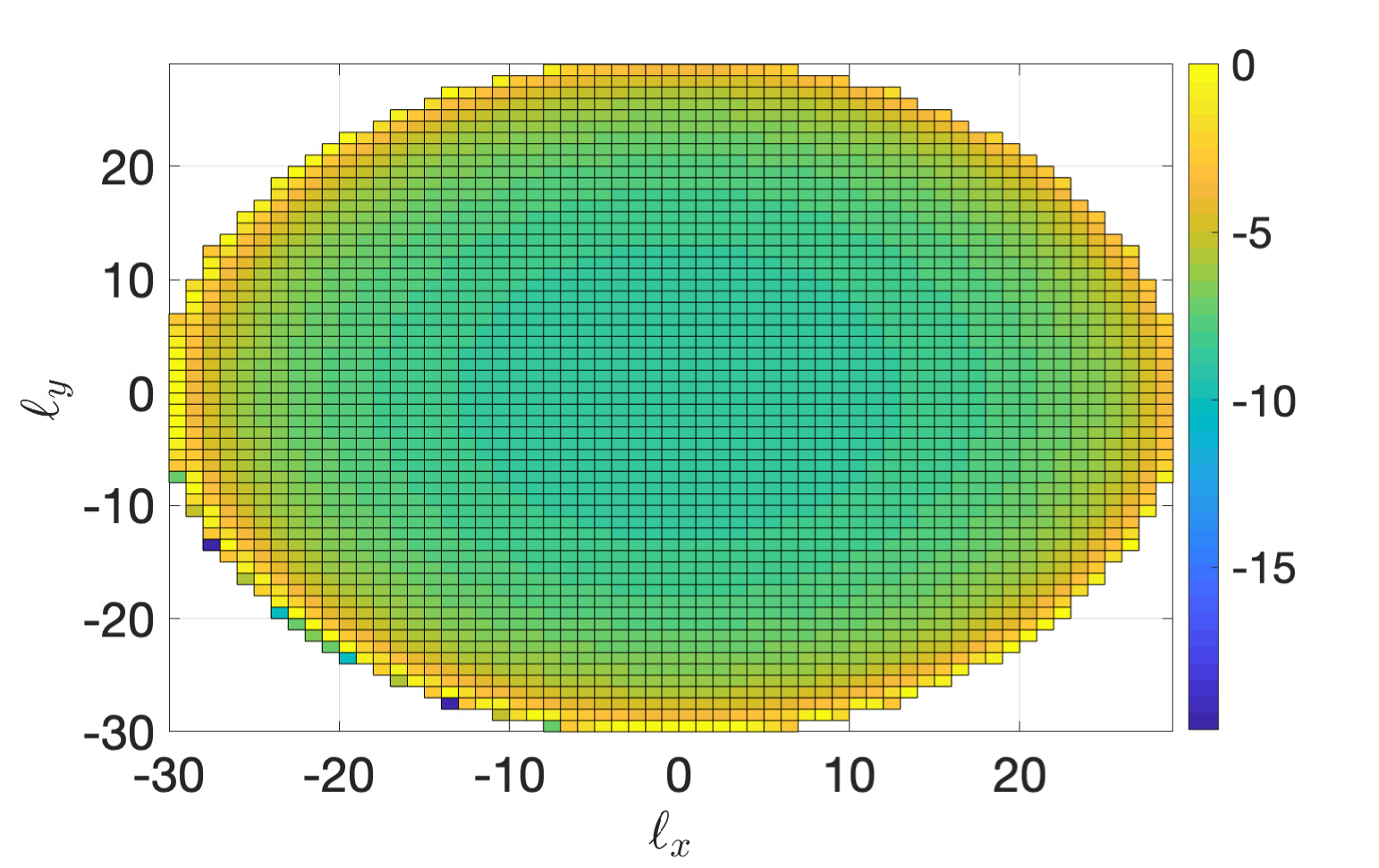}
         \caption{$L/\lambda=30$}
     \end{subfigure}
        \caption{{\small Normalized (to their maximum value) strength of coupling coefficients $N_R \sigma_R^2(\ell_x,\ell_y)$ (in dB) for a squared array of sizes $L/\lambda = \{10,30\}$ with isotropic scattering.}}
        \label{fig:varIso}
\end{figure}

\subsection{Physical modeling of coupling coefficients} \label{sec:physical_model_variances}

In Appendix, after a change of integration variables in~\eqref{variances} from wavenumber $(k_x,k_y,\kappa_x,\kappa_y)$ to spherical coordinates $(\theta_R,\phi_R,\theta_S,\phi_S)$ (i.e., elevation and azimuth angles) we obtain
\begin{align}  \notag
\sigma^2(\ell_x,\ell_y,&m_x,m_y) =\\ & \iiiint_{\Omega_S(m_x,m_y) \times \Omega_R(\ell_x,\ell_y)} 
\hspace{-2.5cm} A^2(\theta_R,\phi_R,\theta_S,\phi_S)   \, d\Omega_S d\Omega_R \label{variance_spherical}
\end{align}
where $\Omega_R(\ell_x,\ell_y)$ and $\Omega_S(m_x,m_y)$ are depicted in Fig.~\ref{fig:integration_subregion} and $d\Omega_R = \sin \theta_R d\theta_R d\phi_R$ and $d\Omega_S = \sin \theta_S d\theta_S d\phi_S$. Clearly, $A^2(\theta_R,\phi_R,\theta_S,\phi_S)$ represents the average angular power transfer in spherical coordinates, which determines the fraction of power transmitted onto $\Omega_S(m_x,m_y)$ and received over $\Omega_R(\ell_x,\ell_y)$. 
This shows that power transfer is thus generally coupled between arrays. 

Practical modeling of $A(\theta_R,\phi_R,\theta_S,\phi_S)$ requires collection of angular measurements of the channel in a prescribed propagation scenario. For analytical purposes, we next illustrate a few simple, but insightful, examples of how to model this function. 
The simplest case is to assume $A^2(\theta_R,\phi_R,\theta_S,\phi_S) = 1$, which corresponds to isotropic propagation.  In this case, the scattering decouples and \eqref{variance_spherical} becomes
\begin{align}  \label{separability_variance}
&\sigma^2(\ell_x,\ell_y,m_x,m_y) =  \sigma_S^2(m_x,m_y) \sigma_R^2(\ell_x,\ell_y)
\end{align}
where $\sigma_S^2(m_x,m_y)$ and $\sigma_R^2(\ell_x,\ell_y)$ account for the power transfer at source and receiver, separately.
Under isotropic scattering, these can be computed in closed-form~\cite[App.~IV.C]{PizzoJSAC20} and physically correspond to solid angles. At the receiver, this is given by 
\begin{align}  \label{solid_angle_rx}
    \sigma_R^2(\ell_x,\ell_y)=|\Omega_R(\ell_x,\ell_y)|  &=  \iint_{\Omega_R(\ell_x,\ell_y) } \hspace{-.5cm} \sin \theta_R d\theta_R d\phi_R
\end{align}
which is uniquely determined by the {$xy$-dimensions of the arrays.}
A direct consequence of~\eqref{separability_variance} is that the fraction of received channel power is the same irrespective from where it is emanated. This tends to result into an optimistic assessment of the number of parallel channels and the level of diversity~\cite[Sec.~3.6]{heath_lozano_2018}.
A simple way to model non-isotropic propagation conditions, while retaining some semblance of the physical reality, is to assume the channel power transfer to be clustered around some $N_{\rm c}\ge1$ modal directions and distributed uniformly within each cluster angular region \cite[Sec.~2]{PoonDoF}. This implies to model the spectral factor as a bounded piecewise constant function over non-overlapped angular sets. In this case, $\sigma^2(\ell_x,\ell_y,m_x,m_y)$ is still decoupled with $ \sigma_R^2(\ell_x,\ell_y)  =  \iint_{\Omega_R(\ell_x,\ell_y) \cap \Theta_R}  \sin \theta_R d\theta_R d\phi_R$
where $\Theta_R$ is the union of all cluster angular regions.
Unlike isotropic propagation, the coupling coefficients are determined by the array spatial resolution and the scattering mechanism jointly; they are non-zero only if $\Omega_S(m_x,m_y)$ and $\Omega_R(\ell_x,\ell_y)$ are (at least partially) subtended by the scattering clusters. A generalized version of the model in~\cite{PoonDoF} is obtained by choosing $A^2(\theta_R,\phi_R,\theta_S,\phi_S) = A_R^2(\theta_R,\phi_R) A_S^2(\theta_S,\phi_S)$ with $A_S(\theta_S,\phi_S)$ and $A_R(\theta_R,\phi_R)$ varying arbitrarily. This yields
\begin{align} \label{variances_rx}
    \sigma_R^2(\ell_x,\ell_y)  &=  \iint_{\Omega_R(\ell_x,\ell_y)} \hspace{-.5cm}A_R^2(\theta_R,\phi_R)  \sin \theta_R d\theta_R d\phi_R.
\end{align}
A good trade-off between tractability and accuracy is offered by the mixture of 3D von Mises-Fisher (vMF) family of angular density functions~\cite{PizzoIT21}. At receiver, e.g., it yields
\begin{equation} \label{VMF_spectral_factor}
A_R^2(\theta_R,\phi_R) = \sum_{i=1}^{N_{\rm c}} w_i \, p_{R,i}(\theta_R,\phi_R)
\end{equation}
with positive weights such that $\sum_i w_i=1$ and $p_{R,i}(\theta_R,\phi_R) = c(\alpha_i) e^{\alpha_i (\sin\theta \sin\mu_{\theta,i} \cos(\phi-\mu_{\phi,i}) + \cos\theta \cos\mu_{\theta,i})}$. Here, $c(\alpha_i) = \alpha_i/(4\pi \sinh \alpha_i)$ is a normalization constant, $\{\mu_{\theta,i},\mu_{\phi,i}\}$ represent the elevation and azimuth angles of the modal direction and $\alpha_i$ is the so-called concentration parameter {for each cluster $i=1,\ldots,N_{\rm c}$.} The formers specify the propagation direction around which the power is concentrated while the latter determines the concentration of angular power, that is, as $\alpha$ increases the density becomes more concentrated around its modal direction. This is directly related to the circular variance $\nu^2 \in[0,1]$ of each cluster \cite[Eqs.~(77)--(78)]{PizzoIT21}, which, in turn, determines the angular spread of the channel $\sigma_a \in[0^\circ,360^\circ]$. Notice that the isotropic case is obtained by setting $N_{\rm c}=1$ and $\alpha_1=0$.

\begin{figure}
     \centering
     \begin{subfigure}[b]{\columnwidth}
         \centering
         \includegraphics[width=.95\columnwidth]{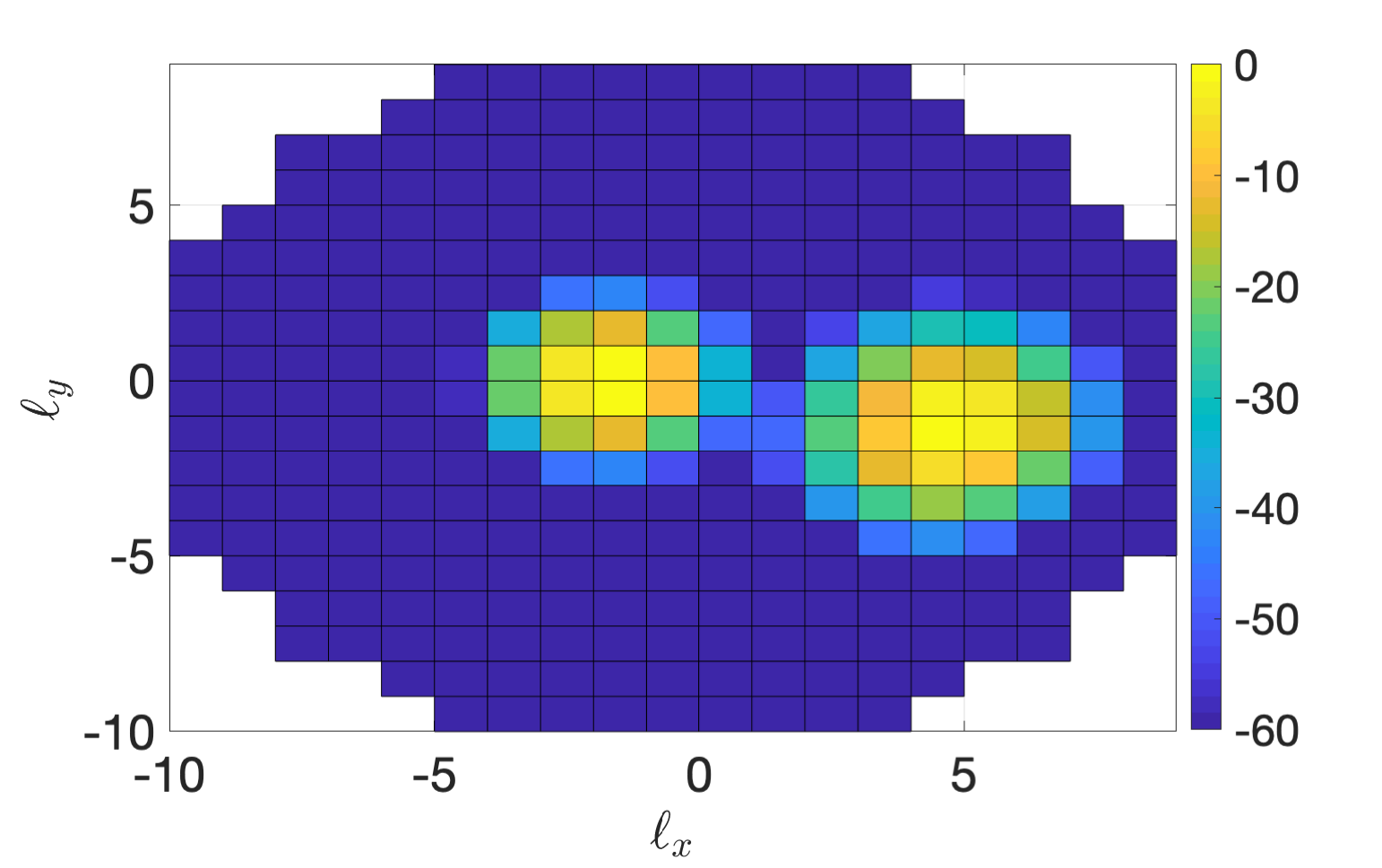}
         \caption{$L/\lambda=10$}
     \end{subfigure}
     \hfill
     \begin{subfigure}[b]{\columnwidth}
         \centering
         \includegraphics[width=.95\columnwidth]{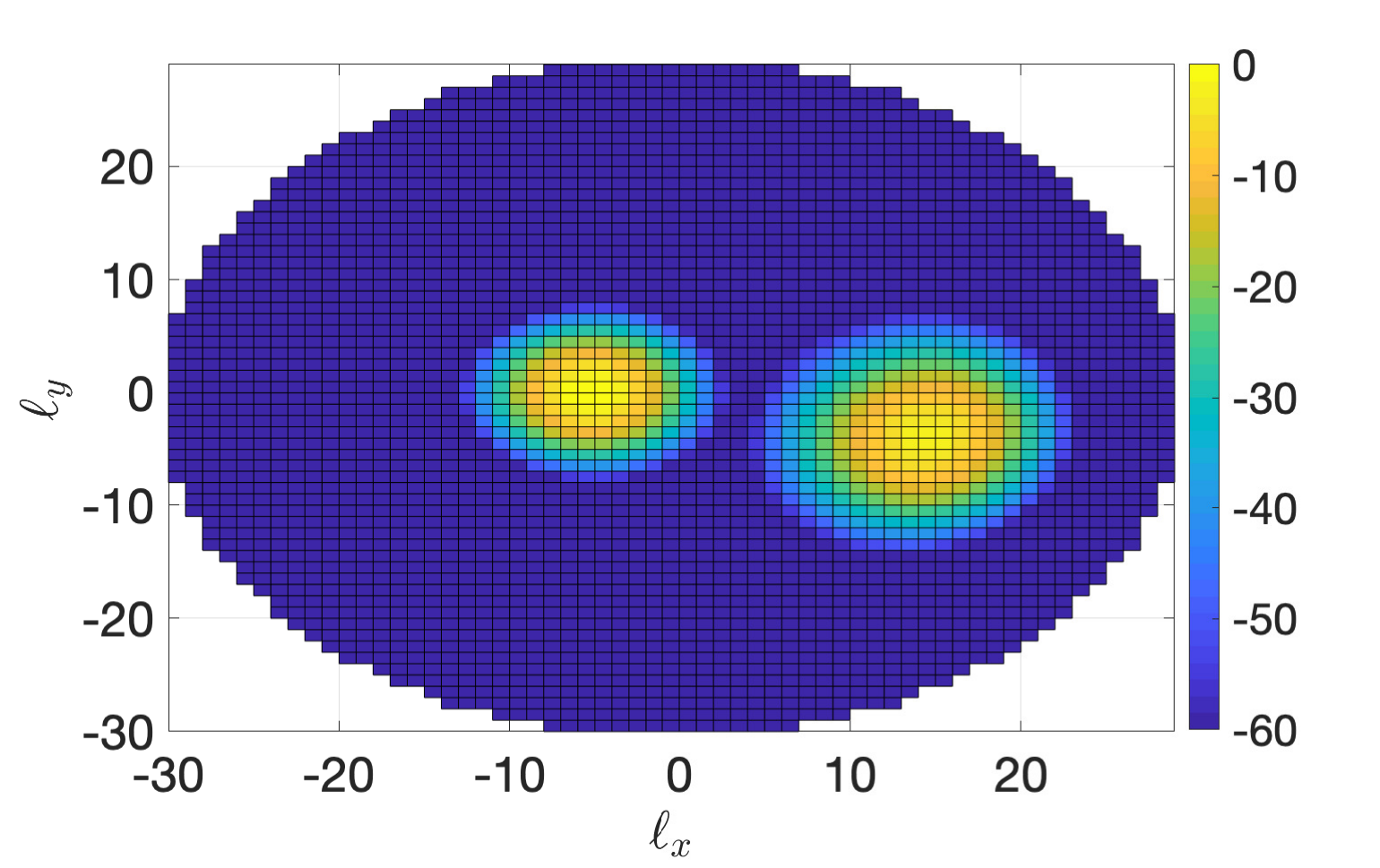}
         \caption{$L/\lambda=30$}
     \end{subfigure}
        \caption{{\small Normalized (to their maximum value) strength of coupling coefficients $N_R \sigma_R^2(\ell_x,\ell_y)$ (dB) for $L/\lambda = \{10,30\}$ with non-isotropic scattering. 
The mixture of 3D vMF angular density function is used with $N_{\rm c}=2$, $\mu_\theta = \{30^\circ,10^\circ\}$, $\mu_\phi= \{345^\circ,180^\circ\}$ and $\alpha$ such that $\nu^2 = \{.01,.005\}$, which roughly corresponds to an angular spread of $\sigma_a = \{15^\circ ,11^\circ\}$.}}
\label{fig:varNonIso}
\end{figure}

\begin{figure*}
\begin{align} \tag{36}\label{Fourier_series_karhunen}
h(\vect{r},\vect{s}) = \mathop{\sum}_{(\ell_x,\ell_y)\in\mathcal{E}_R} \mathop{ \sum}_{(m_x,m_y)\in \mathcal{E}_S}  \widetilde H(\ell_x,\ell_y,m_x,m_y;r_z,s_z)  \phi_R(\ell_x,\ell_y,r_x,r_y)  \phi_S^*(m_x,m_y,s_x,s_y)
\end{align}
\hrule
\begin{align}\label{correlation_karhunen}
c(r_x,r_y,s_x,s_y) = \mathop{\sum}_{(\ell_x,\ell_y)\in\mathcal{E}_R} \mathop{ \sum}_{(m_x,m_y)\in \mathcal{E}_S}   \sigma^2(\ell_x,\ell_y,m_x,m_y)  \phi_R(\ell_x,\ell_y,r_x,r_y)  \phi_S^*(m_x,m_y,s_x,s_y)\tag{38}
\end{align}
\hrule
{\small
\begin{align}\notag
 S&(k_x,k_y,\kappa_x,\kappa_y) = \mathop{\sum}_{(\ell_x,\ell_y)\in\mathcal{E}_R} \mathop{ \sum}_{(m_x,m_y)\in \mathcal{E}_S} \!\!\! \sigma^2(\ell_x,\ell_y,m_x,m_y)  \,
\delta\left(k_x - \frac{2\pi \ell_x}{L_{R,x}}\right) \delta\left(k_y - \frac{2\pi \ell_y}{L_{R,y}}\right) 
\delta\left(\kappa_x - \frac{2\pi m_x}{L_{S,x}}\right) \delta\left(\kappa_y - \frac{2\pi m_y}{L_{S,y}}\right)\label{psd_karhunen} \tag{39}
\end{align}}
\hrule
\end{figure*}

Assume symmetric spectral factors, i.e., $A_R(\theta_R,\phi_R) =A_S(\theta_S,\phi_S)$, and focus on the receive only. Assume also ${L_{R,x}=L_{R,y} =L}$ with ${L/\lambda = \{10,30\}}$.
Based on~\eqref{eq:nr}, there are essentially ${n_R \approx \lceil \pi (L/\lambda)^2\rceil= \{315, 2828\}}$ angular sets.
Under isotropic scattering, the normalized variances $N_R \sigma_R^2(\ell_x,\ell_y)$ in~\eqref{variances_rx} are plotted in Fig.~\ref{fig:varIso}.
As expected, the size of each angular set reduces as $L$ increases due to a higher angular resolution of the array.
The non-zero coupling coefficients are exactly {$n_R = \{344, 2928\}$} and only distributed within the lattice ellipse $\mathcal{E}_R$ in \eqref{epsilon_r} according to a bowl-shaped behavior. Thus, they are never all equal even under isotropic scattering. This observation will be used in Section~\ref{sec:correlation_matrix} to conclude that an electromagnetic random MIMO channel must \emph{necessarily} exhibits spatial correlation. 

The non-isotropic propagation is considered in Fig.~\ref{fig:varNonIso}. Here, we have $N_{\rm c}=2$ with $\mu_\theta = \{30^\circ,10^\circ\}$, $\mu_\phi= \{345^\circ,180^\circ\}$ and $\alpha$ such that $\nu^2 = \{.01,.005\}$, which roughly corresponds to $\sigma_a = \{15^\circ ,11^\circ\}$. The coupling coefficients are still non-zero within the lattice ellipse $\mathcal{E}_R$ in \eqref{epsilon_r} but, unlike Fig.~\ref{fig:varIso}, they achieve higher values around the modal directions. Compared to the isotropic case, a reduced number, say ${n_R^\prime < n_R}$, of coupling coefficients is significant.
Inspired by the three-sigma rule for Gaussian distributions, $n_R^\prime$ may be computed as the number of angular sets that are sufficient to capture the $99.7\%$ of the channel power in~\eqref{power_variances}. This yields ${n_R^\prime = 21+14=35}$ and ${n_R^\prime = 145+84=229}$ for Fig.~\ref{fig:varNonIso}(a) and Fig.~\ref{fig:varNonIso}(b), respectively.

\subsection{Connection to Karhunen-Loeve expansion}

Let $h(t)$ be a band-limited stationary random process of bandwidth $B$ that is observed over a time interval $[0,T]$. In the regime $BT \gg 1$, the eigenfunctions of its Karhunen-Loeve expansion approach complex harmonics oscillating at an integer multiple of the fundamental frequency $1/T$. The eigenvalues' power are obtained by sampling the power spectral density of the process at these frequencies \cite[Sec.~3.4]{VanTreesBook}.
Next, we show how this fundamental result applies to spatially-stationary electromagnetic fields, which we recall to be band-limited in the spatial-frequency domain with maximum circularly bandwidth $\pi\kappa^2$~\cite{PizzoJSAC20,PizzoIT21,PizzoTSP21}.

Consider the two 2D spatial-frequency Fourier harmonics
\begin{align} \label{harmonic_discrete_tx}
\phi_S(m_x,m_y,s_x,s_y)  &= e^{\imagunit \left(\frac{2\pi}{L_{S,x}} m_x s_x + \frac{2 \pi}{L_{S,y}} m_ys_y\right)}  \\
\phi_R(\ell_x,\ell_y,r_x,r_y)  &= e^{\imagunit \left(\frac{2\pi}{L_{R,x}} \ell_xr_x + \frac{2 \pi}{L_{R,y}} \ell_yr_y\right)}
\end{align}
with fundamental periods $(r_x,r_y) \in \mathcal{R}$ and $(s_x,s_y) \in \mathcal{S}$.
For any fixed pair $(r_z,s_z)$, the discretized {plane waves} in~\eqref{Fourier_series} correspond to two phase-shifted versions of the two 2D spatial-frequency Fourier harmonics, i.e.,
\begin{align} \label{plane_wave_discrete_tx_Fourier}
a_S(m_x,m_y,\vect{s}) & = \phi_S^*(m_x,m_y,s_x,s_y) e^{-\imagunit  \gamma_S(m_x,m_y) s_z} \\ \label{plane_wave_discrete_rx_Fourier}
a_R(\ell_x,\ell_y,\vect{r}) & = \phi_R(\ell_x,\ell_y,r_x,r_y) e^{\imagunit  \gamma_R(\ell_x,\ell_y) r_z}.
\end{align} 
Hence, we can rewrite~\eqref{Fourier_series} as~\eqref{Fourier_series_karhunen}
where we have defined
\setcounter{equation}{36}
\begin{align}\notag
  \widetilde H&(\ell_x,\ell_y,m_x,m_y;r_z,s_z) \\& = H_a(\ell_x,\ell_y,m_x,m_y)    e^{-\imagunit \gamma_S(m_x,m_y) s_z} e^{\imagunit \gamma_R(\ell_x,\ell_y) r_z}.\label{spectral_response}
 \end{align}
Notice that $\widetilde H(\ell_x,\ell_y,m_x,m_y;r_z,s_z)$ and $H_a(\ell_x,\ell_y,m_x,m_y)$ are statistically equivalent due to~\eqref{Fourier_coeff}. Hence, we can remove the dependance on $r_z$ and $s_z$ on the spatial correlation function $c(\vect{r},\vect{s}) = \Ex\{h(\vect{r}+\vect{r}^\prime,\vect{s}+\vect{s}^\prime) h^*(\vect{r}^\prime,\vect{s}^\prime)\}$ of $h(\vect{r},\vect{s})$.
By using \eqref{Fourier_series_karhunen} and \eqref{spectral_response}, it can be approximated as in~\eqref{correlation_karhunen}.

The closed-form expression~\eqref{correlation_karhunen} can be regarded as the \emph{asymptotic} Hilbert-Schmidt decomposition \cite[Sec.~3.4]{FranceschettiBook} of the self-adjoint correlation kernel function $c(\vect{r},\vect{s})$; that is, $\{\phi_S(m_x,m_y, \vect{s})\}$ and $\{\phi_R(\ell_x,\ell_y, \vect{r})\}$ are the complete (non-normalized) orthonormal basis sets of eigenfunctions, and $\{\sigma^2(\ell_x,\ell_y,m_x,m_y)\}$ is the sequence of non-negative real-valued eigenvalues.
In analogy with the time-domain stationary case \cite{VanTreesBook}, the expansion of $h(\vect{r},\vect{s})$ over the above basis sets of eigenfunctions yields~\eqref{Fourier_series_karhunen}, which is the \emph{asymptotic} Karhunen-Loeve expansion~\cite[Sec. 6.4]{FranceschettiBook} of a spatially-stationary electromagnetic random field. As for time-domain processes, the asymptotic regime is achieved under Assumption~\ref{Assumption1}.
Unlike the time-domain case, the spatial case exhibits a lower-dimensionality since the {six-dimensional} power spectral density of $h(\vect{r},\vect{s})$ is impulsive and defined on a double sphere of radius $\kappa$ \cite{PizzoJSAC20,PizzoIT21}. Another key difference is that we do not sample at integer multiples of the fundamental spatial frequencies, but rather integrate \eqref{psd_4d} over a neighborhood of these frequencies. As shown in Appendix, this is because \eqref{psd_4d} is singularly-integrable. Notice that the standard sampling of the power spectral density in~\eqref{psd_4d} at multiple of the fundamental spatial frequencies would have generated a divergent series expansion.
By applying a 4D spatial Fourier transform to~\eqref{correlation_karhunen} we obtain a power spectral density of the form in~\eqref{psd_karhunen}, which is impulsive due to the periodic nature of $c(\vect{r},\vect{s})$.  Nevertheless, it can by shown that~\eqref{psd_karhunen} tends to~\eqref{psd_4d} asymptotically as $\min(L_{S,x},L_{S,y})/\lambda\to \infty$ and $\min(L_{R,x},L_{R,y})/\lambda\to \infty$.

\section{Stochastic Electromagnetic MIMO Channel Model}\label{sec:MIMO_channel_model}
\setcounter{equation}{39}
From~\eqref{eq:MIMO_entries}, the MIMO channel matrix is approximated by sampling the series representation in~\eqref{Fourier_series} as
\begin{align} \notag
\vect{H}  = &   \sqrt{N_R N_S}  \mathop{\sum}_{(\ell_x,\ell_y)\in\mathcal{E}_R} \mathop{ \sum}_{(m_x,m_y)\in \mathcal{E}_S} \\& \times H_a(\ell_x,\ell_y,m_x,m_y)  \vect{a}_R(\ell_x,\ell_y)  \vect{a}_S^{\Htran}(m_x,m_y)\!\!\label{Fourier_series_discrete} 
\end{align}
where $\vect{a}_R(\ell_x,\ell_y)$ and  $\vect{a}_S(m_x,m_y)$ represents the (normalized) discrete-space source and receive array responses with entries
\begin{align} \label{a_s_vec}
\left[\vect{a}_S(m_x,m_y)\right]_j & = \frac{1}{\sqrt{N_S}} a_S(m_x,m_y,{s}_{x_j},{s}_{y_j},s_z) \\ \label{a_r_vec}
\left[\vect{a}_R(\ell_x,\ell_y)\right]_i &= \frac{1}{\sqrt{N_R}} a_R(\ell_x,\ell_y,{r}_{x_j},{r}_{y_j},r_z).
\end{align}
Different array geometries and antenna spacings may have a marked effect on $\vect{H}$ and its statistics. To guarantee that no information is lost by sampling $h(\vect{r},\vect{s})$, the Nyquist condition in the spatial domain must be satisfied. From Corollary~\ref{th:bandlimited},~$h(\vect{r},\vect{s})$ is a circularly-bandlimited channel with maximum bandwidth $\pi\kappa^2$, for any scattering environment. For a uniform spatial sampling of $h(\vect{r},\vect{s})$, the Nyquist condition is met when antenna separation is at most half-wavelength (e.g.,\cite[Sec. V]{PizzoJSAC20}, \cite{PizzoTSP21}).
In other words, no information is lost when the transmit and receive arrays are equipped with
\begin{align}\label{eq:numofantN_S}
N_S &\ge \frac{4 L_{S,x} L_{S_y}}{\lambda^2} \ge n_S \\
N_R &\ge \frac{4 L_{R,x} L_{R_y}}{\lambda^2} \ge n_R\label{eq:numofantN_r}
\end{align}
antenna elements under Assumption~\ref{Assumption1}. {Both conditions are assumed to be satisfied in the remainder. Also, we assume that antenna spacing is uniform, leading to a uniform spatial sampling of~\eqref{Fourier_series}}.
Notice that~\eqref{eq:numofantN_S}-\eqref{eq:numofantN_r} ensure that the number of DoF created by the the source and resolved by the receiver is not less than the maximum DoF that may possibly be generated by the scattering, i.e., $\min(n_R,n_S)$. In practice, this number is limited by the richness of the scattering.

\begin{remark}
We stress that sampling at Nyquist's rate (or higher) ensures no loss of channel information. This allows to fully exploit the propagation characteristics offered by an electromagnetic channel and to design a system that ultimately exploits all its DoF, which is exactly the scope of a Holographic MIMO system.
\end{remark}

There is a 2D counterpart to the 3D theory presented in this paper, where wave propagation takes place on a 2D plane rather than a 3D space; see \cite{PizzoJSAC20,PizzoIT21}. In this case, a similar representation of $\vect{H}$ is obtained, which is reminiscent of the virtual channel representation in~\cite{Sayeed2002,Veeravalli}.
Both provide an angular decomposition of a 2D MIMO channel over a fixed sets of directions that are specified by statistically-independent random coefficients. 
However, the Fourier plane-wave model differs from the virtual channel representation in several aspects: \emph{i}) it is derived from the physics principles of wave propagation and it is thus valid also in the near-field propagation region; \emph{ii}) it models planar and volumetric arrays of arbitrary geometry operating in a 3D propagation environment; \emph{iii}) it reveals the lower-dimensionality of the angular description for electromagnetic channels, i.e., only $n_R n_S$ (rather than $N_R N_S$) coupling coefficients contain the essential information; \emph{iv}) it supports the physical model with a statistical analysis that is built upon a closed-form expression of the power spectral density of an electromagnetic random channel.

\subsection{Karhunen-Loeve expansion} \label{sec:KL_expansion}

Call ${\boldsymbol{\phi}_S(m_x,m_y)\in\mathbb{C}^{N_S}}$ the vector with entries ${\left[\boldsymbol{\phi}_S(m_x,m_y)\right]_j = \frac{1}{\sqrt{N_S}}{\phi}_S(m_x,m_y,{s}_{x_j},{s}_{y_j})}$
for ${j=1,\ldots,N_S}$. Similarly, $ \boldsymbol{\phi}_R(\ell_x,\ell_y) \in\mathbb{C}^{N_R}$ has entries $\left[\boldsymbol{\phi}_R(\ell_x,\ell_y)\right]_i = \frac{1}{\sqrt{N_R}}{\phi}_R(\ell_x,\ell_y,{r}_{x_j},{r}_{y_j})$
for $i=1,\ldots,N_R$. Hence,~\eqref{Fourier_series_discrete} can equivalently be rewritten as
\begin{align} \notag
\vect{H}  =  & \sqrt{N_R N_S}  \mathop{\sum}_{(\ell_x,\ell_y)\in\mathcal{E}_R} \mathop{ \sum}_{(m_x,m_y)\in \mathcal{E}_S} & \widetilde H(\ell_x,\ell_y,m_x,m_y;r_z,s_z)  \\& \hspace{.1cm} \times \boldsymbol{\phi}_R(\ell_x,\ell_y)  \boldsymbol{\phi}_S^{\Htran}(m_x,m_y) \label{Fourier_series_discrete_KL} 
\end{align}
where $\widetilde H(\ell_x,\ell_y,m_x,m_y;r_z,s_z)$ are given by~\eqref{spectral_response}. 
{With uniform sampling}, $\{\boldsymbol{\phi}_S(m_x,m_y)\}$ and $\{\boldsymbol{\phi}_R(\ell_x,\ell_y)\}$ constitute a set of orthonormal discrete basis functions. Hence,~\eqref{Fourier_series_discrete_KL} can be regarded as the Karhunen-Loeve expansion of the electromagnetic MIMO channel~\cite{Tulino2005}.
Notice that these orthonormal basis are fixed; they do not depend on the statistics of $\vect{H}$. The strength of each coupling coefficient specifies the average amount of energy transmitted by the $(m_x,m_y)$th source basis function that couples with the $(\ell_x,\ell_y)$th receive basis function. The average channel power is thus $\Ex\left\{\tr(\vect{H}^{\Htran}\vect{H})\right\} = P$ in~\eqref{power_variances}. 

Denote $\boldsymbol{\Phi}_S$ and $\boldsymbol{\Phi}_R$ the deterministic matrices collecting the $n_S$ and $n_R$ vectors $\{\boldsymbol{\phi}_S(m_x,m_y)\}$ and $\{\boldsymbol{\phi}_R(\ell_x,\ell_y)\}$, respectively. These are semi-unitary matrices, i.e., $\boldsymbol{\Phi}_S^{\Htran}\boldsymbol{\Phi}_S = \vect{I}_{n_S}$ and $\boldsymbol{\Phi}_R^{\Htran} \boldsymbol{\Phi}_R= \vect{I}_{n_R}$. Precisely, they are obtained by collecting columns of two 2D inverse discrete Fourier transform (IDFT) matrices.  
Let $\boldsymbol{\gamma}_S$ and $\boldsymbol{\gamma}_R$ be the column vectors containing the $n_S$ and $n_R$ coefficients $\gamma_S(m_x,m_y)$ and $\gamma_R(\ell_x,\ell_y)$. The following lemma is thus obtained.

\begin{lemma} \label{th:channel_matrix_lemma}
For any $s_z$ and $r_z > s_z$, the MIMO channel matrix  can approximately be described by
\begin{equation}\label{channel_matrix}
\vect{H} = \boldsymbol{\Phi}_R \widetilde{\vect{H}}  \boldsymbol{\Phi}_S^{\Htran} 
\end{equation}
where $\widetilde{\vect{H}}  = e^{\imagunit\boldsymbol{\Gamma}_R} \vect{H}_a e^{-\imagunit\boldsymbol{\Gamma}_S}$.
Here, $e^{\imagunit\boldsymbol{\Gamma}_R}$ and $e^{-\imagunit\boldsymbol{\Gamma}_S}$ are diagonal matrices with $\boldsymbol{\Gamma}_R = \diag(\boldsymbol{\gamma}_R) r_z $ and $\boldsymbol{\Gamma}_S = \diag( \boldsymbol{\gamma}_S) s_z$,
and $\vect{H}_a\in\mathbb{C}^{n_R\times n_S}$ is the angular random matrix obtained as
\begin{equation}  \label{angular_coefficients}
\vect{H}_a = \boldsymbol{\Sigma} \odot \vect{W}
\end{equation}
where $\boldsymbol{\Sigma} \in \Real_+^{n_R \times n_S}$ collects the $n_Rn_S$ scaled standard deviations $\{\sqrt{N_SN_R} \sigma(\ell_x,\ell_y,m_x,m_y)\}$ and $\vect{W} \in \Complex^{n_R \times n_S}$ is a matrix with i.i.d. circularly-symmetric, complex-Gaussian random entries. 
\end{lemma}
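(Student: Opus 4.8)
The plan is to read the double-sum representation~\eqref{Fourier_series_discrete_KL} as a single triple matrix product and then to peel off the two $z$-dependent phase factors. I would begin by fixing, once and for all, an enumeration of the two lattice-ellipse index sets: assign to each $(\ell_x,\ell_y)\in\mathcal{E}_r$ a column index $p\in\{1,\dots,n_r\}$ and to each $(m_x,m_y)\in\mathcal{E}_s$ a column index $q\in\{1,\dots,n_s\}$. Under this labelling, $\boldsymbol{\Phi}_r$ is the $N_r\times n_r$ matrix whose $p$th column is $\boldsymbol{\phi}_r(\ell_x,\ell_y)$ and $\boldsymbol{\Phi}_s$ the $N_s\times n_s$ matrix whose $q$th column is $\boldsymbol{\phi}_s(m_x,m_y)$, exactly as defined just above the lemma.

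The crux is the elementary identity that a weighted sum of rank-one outer products is a matrix sandwich: for any coefficient matrix $\vect{D}$ one has $\boldsymbol{\Phi}_r\vect{D}\,\boldsymbol{\Phi}_s^{\Htran}=\sum_{p,q}[\vect{D}]_{pq}\,\boldsymbol{\phi}_r(\ell_x,\ell_y)\,\boldsymbol{\phi}_s^{\Htran}(m_x,m_y)$, which I would verify by writing the $(i,j)$ entry of the left-hand side as $\sum_{p,q}[\boldsymbol{\Phi}_r]_{ip}[\vect{D}]_{pq}[\boldsymbol{\Phi}_s]_{jq}^{*}$. Choosing $[\vect{D}]_{pq}=\sqrt{N_rN_s}\,\widetilde H(\ell_x,\ell_y,m_x,m_y;r_z,s_z)$ makes the right-hand side coincide with~\eqref{Fourier_series_discrete_KL}, so this $\vect{D}$ is precisely the matrix I name $\widetilde{\vect{H}}$ and $\vect{H}=\boldsymbol{\Phi}_r\widetilde{\vect{H}}\,\boldsymbol{\Phi}_s^{\Htran}$ follows at once.

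It remains only to expose the internal structure of $\widetilde{\vect{H}}$. Substituting the definition~\eqref{spectral_response} factors each entry as $e^{\imagunit\gamma_r(\ell_x,\ell_y)r_z}$ times $\sqrt{N_rN_s}\,H_a(\ell_x,\ell_y,m_x,m_y)$ times $e^{-\imagunit\gamma_s(m_x,m_y)s_z}$; because the first phase depends only on the row index $p$ and the last only on the column index $q$, they can be pulled out into a left diagonal factor $e^{\imagunit\boldsymbol{\Gamma}_r}$ with $\boldsymbol{\Gamma}_r=\diag(\boldsymbol{\gamma}_r r_z)$ and a right diagonal factor $e^{-\imagunit\boldsymbol{\Gamma}_s}$ with $\boldsymbol{\Gamma}_s=\diag(\boldsymbol{\gamma}_s s_z)$, leaving $\widetilde{\vect{H}}=e^{\imagunit\boldsymbol{\Gamma}_r}\vect{H}_a\,e^{-\imagunit\boldsymbol{\Gamma}_s}$ with $[\vect{H}_a]_{pq}=\sqrt{N_rN_s}\,H_a(\ell_x,\ell_y,m_x,m_y)$. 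Finally, writing each Fourier coefficient as $H_a=\sigma\,W$ per~\eqref{Fourier_coeff} with $W\sim\CN(0,1)$ i.i.d., the matrix $\vect{H}_a$ becomes the entrywise product of the deterministic matrix $\boldsymbol{\Sigma}$ of entries $\sqrt{N_rN_s}\,\sigma(\ell_x,\ell_y,m_x,m_y)$ with the i.i.d.\ Gaussian matrix $\vect{W}$, i.e.\ $\vect{H}_a=\boldsymbol{\Sigma}\odot\vect{W}$. The whole argument is essentially bookkeeping, and the only point that demands genuine care---the sole ``obstacle''---is to hold a single consistent enumeration of the two ellipses across $\boldsymbol{\Phi}_r,\boldsymbol{\Phi}_s$, the diagonals $\boldsymbol{\Gamma}_r,\boldsymbol{\Gamma}_s$, and the row/column ordering of $\boldsymbol{\Sigma}$ and $\vect{W}$, while correctly tracking the conjugation carried by $\boldsymbol{\phi}_s^{\Htran}$.
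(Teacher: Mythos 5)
Your proposal is correct and follows exactly the route the paper intends: Lemma~\ref{th:channel_matrix_lemma} is presented as an immediate matrix-form restatement of the double-sum expansion~\eqref{Fourier_series_discrete_KL} together with~\eqref{spectral_response} and~\eqref{Fourier_coeff}, with the $\sqrt{N_rN_s}$ factor absorbed into $\boldsymbol{\Sigma}$ and the row/column-dependent phases pulled out into the diagonal propagator matrices. The bookkeeping you spell out (consistent enumeration of $\mathcal{E}_r$ and $\mathcal{E}_s$, and the conjugation carried by $\boldsymbol{\Phi}_s^{\Htran}$) is precisely what the paper leaves implicit.
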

Since $\boldsymbol{\Phi}_S$ and $\boldsymbol{\Phi}_R$ are semi-unitary matrices, and $\widetilde{\vect{H}} $ is statistically equivalent to $\vect{H}_a$, the above lemma shows that the angular matrix $\vect{H}_a$ is \emph{semi-unitarily equivalent} to $\vect{H}$, in the sense that the most significant $\min(n_R,n_S)$ singular values of the two matrices are identical. This property is reminiscent of the unitary-independent-unitary (UIU) MIMO channel model~\cite[Eq.~(1)]{Tulino2005}, where the channel is decomposed in terms of $N_R-$ and $N_S$-dimensional unitary matrices.
However, as $n_S \ll N_S$ and $n_R \ll N_R$, we have that channel models in the UIU form represent a highly redundant description of the electromagnetic channel. In other words, the provided Fourier plane-wave model yields a physics-based \emph{low-rank approximation} of $\vect{H}$ with respect to fixed spatial basis matrices that are independently defined by the array geometry at each end. 
The low-rank property of electromagnetic channels is stronger under non-isotropic propagation conditions when only a subset of coupling coefficients is significant, as discussed in Section~\ref{sec:physical_model_variances}. 
Since the coupling coefficients are independent, the rows and columns of $\vect{H}_a$ are linearly independent with probability $1$ except for those that are identically zero~\cite{Tulino2005}. Particularly, if we denote with $n_S^\prime \le n_S$ and $n_R^\prime \le n_R$ the number of rows and columns of $\vect{H}_a$ that are not identically zero, the number of singular values is
\begin{equation} \label{DoF}
{\rm rank}(\vect{H}_a) = \min(n_S^\prime,n_R^\prime) \le \min(n_S,n_R)
\end{equation}
which corresponds to the DoF of the channel.
As observed in~\cite{PoonDoF}, \cite{PizzoSPAWC20}, the DoF does not scale with the number of antenna elements but depends on the scattering environment and array sizes jointly.
Clearly, the number of DoF equals the upper bound in~\eqref{DoF} when non-zero power is received from all angular sets; that is, under isotropic scattering. In this case, the DoF per m$^2$ are roughly given by $\lceil\pi/\lambda^2\rceil$~\cite{PizzoSPAWC20}, which can be rewritten as {the Landau's formula $\lceil |\mathcal{D}(\kappa)|/(2\pi)^2\rceil$\cite{PizzoTSP21}.} In general, the wavenumber support $\mathcal{K}\subseteq \mathcal{D}(\kappa)$ is limited by the richness of the non-isotropic scattering \cite{PizzoTSP21}.
A rough estimate of its measure {$|\mathcal{K}|$} is obtained by computing the area covered by all rectangular sets in \eqref{S_s} and \eqref{S_r} wherein the spectral factor is non-zero. As expected, this corresponds to the number of non-zero coupling coefficients per m$^2$.

{Notice that whole randomness of $\vect{H}$ is fully embedded into $\vect{H}_a$ whereas the deterministic matrices $\boldsymbol{\Phi}_S^{\Htran}  e^{-\imagunit\boldsymbol{\Gamma}_S}$ and $\boldsymbol{\Phi}_R e^{\imagunit\boldsymbol{\Gamma}_R}$ change the domain of representation from angular to spatial at source and receiver, respectively.
Particularly, the two matrices $e^{\imagunit \boldsymbol{\Gamma}_R}$ and $e^{-\boldsymbol{\imagunit\Gamma}_s}$ are known in physics as \emph{migration filters} and are fully determined by the array geometry and wavelength {\cite{PizzoTSP21}}.
They contain the entire effect due to wave propagation along the $z$-axis. Note that this effect is fully deterministic and known a priori; hence, there is no channel information that can be captured along the $z$-dimension. 
As observed in \cite{Franceschetti}, the world -- at both source and receiver -- has only an apparent 3D informational structure, which is subject to a 2D representation. A key consequence of this observation is that a 3D volumetric array offers no extra DoF over a 2D planar array (e.g.,~\cite{Miller,PizzoSPAWC20}).

The MIMO channel model~\eqref{channel_matrix} does not have a Kronecker structure. However, this naturally arises if the separability propagation condition \eqref{separability_variance} is imposed.
\begin{corollary}[Separable model] \label{th:separable_model}
If the scattering is separable, \eqref{angular_coefficients} becomes
 \begin{equation} \label{matrix_angular_coefficients_kronecker}
\vect{H}_a  =  \diag(\boldsymbol{\sigma}_R) \vect{W} \diag(\boldsymbol{\sigma}_S)
\end{equation}
where ${\boldsymbol{\sigma}_R \in \Real_+^{n_R}}$ and ${\boldsymbol{\sigma}_S \in \Real_+^{n_S}}$ collect $\{\sqrt{N_R} \sigma_R(\ell_x,\ell_y)\}$ and $\{\sqrt{N_S} \sigma_S(m_x,m_y)\}$, respectively. The use of~\eqref{matrix_angular_coefficients_kronecker} into \eqref{channel_matrix} yields the Kronecker channel matrix
\begin{equation}\label{channel_matrix_kronecker}
\vect{H} = \left(\boldsymbol{\Phi}_R \diag(\boldsymbol{\sigma}_R) e^{\imagunit\boldsymbol{\Gamma}_R}  \right) \vect{W} \left( e^{-\imagunit\boldsymbol{\Gamma}_S} \diag(\boldsymbol{\sigma}_S)   \boldsymbol{\Phi}_S^{\Htran} \right).
\end{equation}
\end{corollary}
\begin{proof}
Under the separability condition~\eqref{separability_variance}, $\boldsymbol{\Sigma}$ in \eqref{angular_coefficients} becomes $\boldsymbol{\Sigma} = \boldsymbol{\sigma}_R \boldsymbol{\sigma}_S^{\Ttran} = \boldsymbol{\sigma}_R \vect{1}^{\Ttran}_{n_S} \odot  \vect{1}_{n_R} \boldsymbol{\sigma}_S^{\Ttran}$ and thus $\vect{H}_a  = \left(\boldsymbol{\sigma}_R \vect{1}^{\Ttran}_{n_S}\right) \odot 
\left(\vect{1}_{n_R} \boldsymbol{\sigma}_S^{\Ttran}\right)  \odot \vect{W}$. By applying the matrix identity $\left(\vect{a} \vect{b}^{\Ttran}\right) \odot  \vect{X} = \diag(\vect{a}) \vect{X} \diag(\vect{b})$ twice, we obtain~\eqref{matrix_angular_coefficients_kronecker}.
\end{proof}
A similar model was proposed in~\cite[Eq.~(14) and Eq.~(27d)]{PoonCapacity} for a MIMO channel with linear arrays in the far-field. The model in~\eqref{matrix_angular_coefficients_kronecker} is a generalization for planar arrays {valid also} in the near-field. 

\subsection{Channel Statistics}\label{sec:correlation_matrix}
Lemma~\ref{th:channel_matrix_lemma} gives rise to correlated Rayleigh fading (e.g.,~\cite{LucaBook,heath_lozano_2018}) where {${\bf R}$ in~\eqref{full_correlation_matrix} describes the joint correlation properties of both link ends.}
A detailed description of ${\bf R}$ is thus in order. 
We begin by noticing that $\widetilde{\vect{H}}$ and ${\vect{H}}_a$ are statistically equivalent to each other, i.e., the channel statistics are invariant to any translation of the two arrays along the $z$-axis. In particular, since $e^{\imagunit\boldsymbol{\Gamma}_R}$ and $e^{-\imagunit\boldsymbol{\Gamma}_S}$ are diagonal, by using the matrix identity $ \diag(\vect{a}) (\vect{X} \odot \vect{Y}) \diag(\vect{b}) = \vect{X} \odot (\diag(\vect{a}) \vect{Y} \diag(\vect{b}))$
we may write $\widetilde{\vect{H}}$ as $\widetilde{\vect{H}}  = \boldsymbol{\Sigma} \odot (e^{\imagunit\boldsymbol{\Gamma}_R}{\vect{W}} e^{-\imagunit\boldsymbol{\Gamma}_S})$. Hence, the equivalence follows from~\eqref{angular_coefficients} and the statistical equivalence between $e^{\imagunit\boldsymbol{\Gamma}_R}{\vect{W}} e^{-\imagunit\boldsymbol{\Gamma}_S}$ and ${\vect{W}}$. The correlation matrix is obtained from~\eqref{channel_matrix} as (e.g.,\cite{Weichselberger2006})
\begin{equation}\label{correlation_matrix}
{\bf R}= {\bf U} \boldsymbol{\Lambda} {\bf U}^{\Htran}
\end{equation}
where ${\bf U} =  \boldsymbol{\Phi}_S \kron \boldsymbol{\Phi}_R \in \mathbb{C}^{N_R N_S \times n_R n_S}
$
is semi-unitary\footnote{The Kronecker product of semi-unitary matrices is semi-unitary.} (i.e., ${\bf U}^{\Htran} {\bf U} = \vect{I}_{n_S n_R}$) and 
\begin{equation} \label{eigenvalues_channel}
\boldsymbol{\Lambda} = \diag({\rm vec}(\boldsymbol{\Sigma} \odot \boldsymbol{\Sigma})).
\end{equation}
The expression~\eqref{correlation_matrix} provides the eigendecomposition of ${\bf R}$ with $(\boldsymbol{\phi}_S(m_x,m_y) \kron \boldsymbol{\phi}_R(\ell_x,\ell_y))$ being the eigenvectors and $N_SN_R \sigma^2(\ell_x,\ell_y,m_x,m_y)$ being the eigenvalues. As mentioned before, the eigenvectors are fixed, not derived from the covariance of ${\bf H}$ itself. Also, the eigenvalues are limited to $n_Rn_S$ by physical principles. Hence,~\eqref{correlation_matrix} provides a significant computational saving compared to its direct computation, which would require knowledge of $(N_RN_S)^2$ real-valued parameters.

As shown in~\cite[Eq. (6)]{Weichselberger2006}, the correlation matrix ${\bf R}$ in~\eqref{correlation_matrix} has as a Kronecker structure on eigenmode level. If the separability condition is imposed, by inspection of \eqref{channel_matrix_kronecker} in Corollary~\ref{th:separable_model} and the statistical equivalence between $e^{\imagunit\boldsymbol{\Gamma}_R}{\vect{W}} e^{-\imagunit\boldsymbol{\Gamma}_S}$ and ${\vect{W}}$ it follows that
\begin{equation}
\vect{R} = \vect{R}_S \kron \vect{R}_R
\end{equation}
with correlation matrices $\vect{R}_R  = \boldsymbol{\Phi}_R \diag(\boldsymbol{\sigma}_R \odot \boldsymbol{\sigma}_R) \boldsymbol{\Phi}_R^{\Htran}  \in \Complex^{N_R \times N_R}$ and $\vect{R}_S  = \boldsymbol{\Phi}_S \diag(\boldsymbol{\sigma}_S \odot \boldsymbol{\sigma}_S) \boldsymbol{\Phi}_S^{\Htran}  \in \Complex^{N_S \times N_S}$ and eigenvalue matrix $\boldsymbol{\Lambda} = \diag(\boldsymbol{\sigma}_R \odot \boldsymbol{\sigma}_R) \otimes \diag(\boldsymbol{\sigma}_S \odot \boldsymbol{\sigma}_S)$.
Notice that, in the above Kronecker model, the correlation between two transmit (receive) antennas is the same irrespective of the receive (transmit) antenna where it is observed. This is not the case for the general coupled model in~\eqref{correlation_matrix}.

\begin{figure}
     \centering
     \begin{subfigure}[b]{\columnwidth}
         \centering
         \includegraphics[width=1\columnwidth]{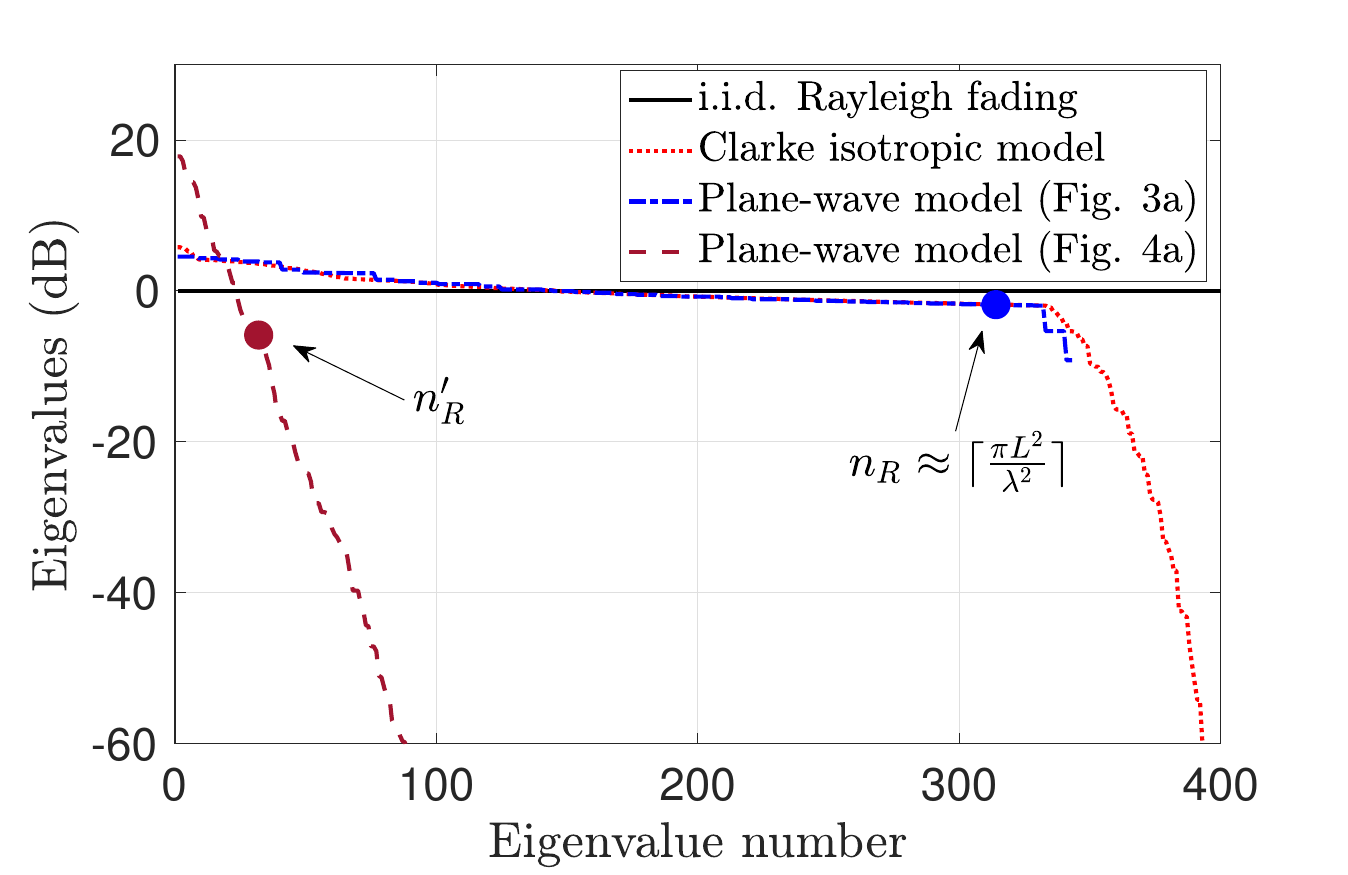}
         \caption{Array with $\lambda/2$ antenna spacing}
     \end{subfigure}
     \hfill
     \begin{subfigure}[b]{\columnwidth}
         \centering
         \includegraphics[width=1\columnwidth]{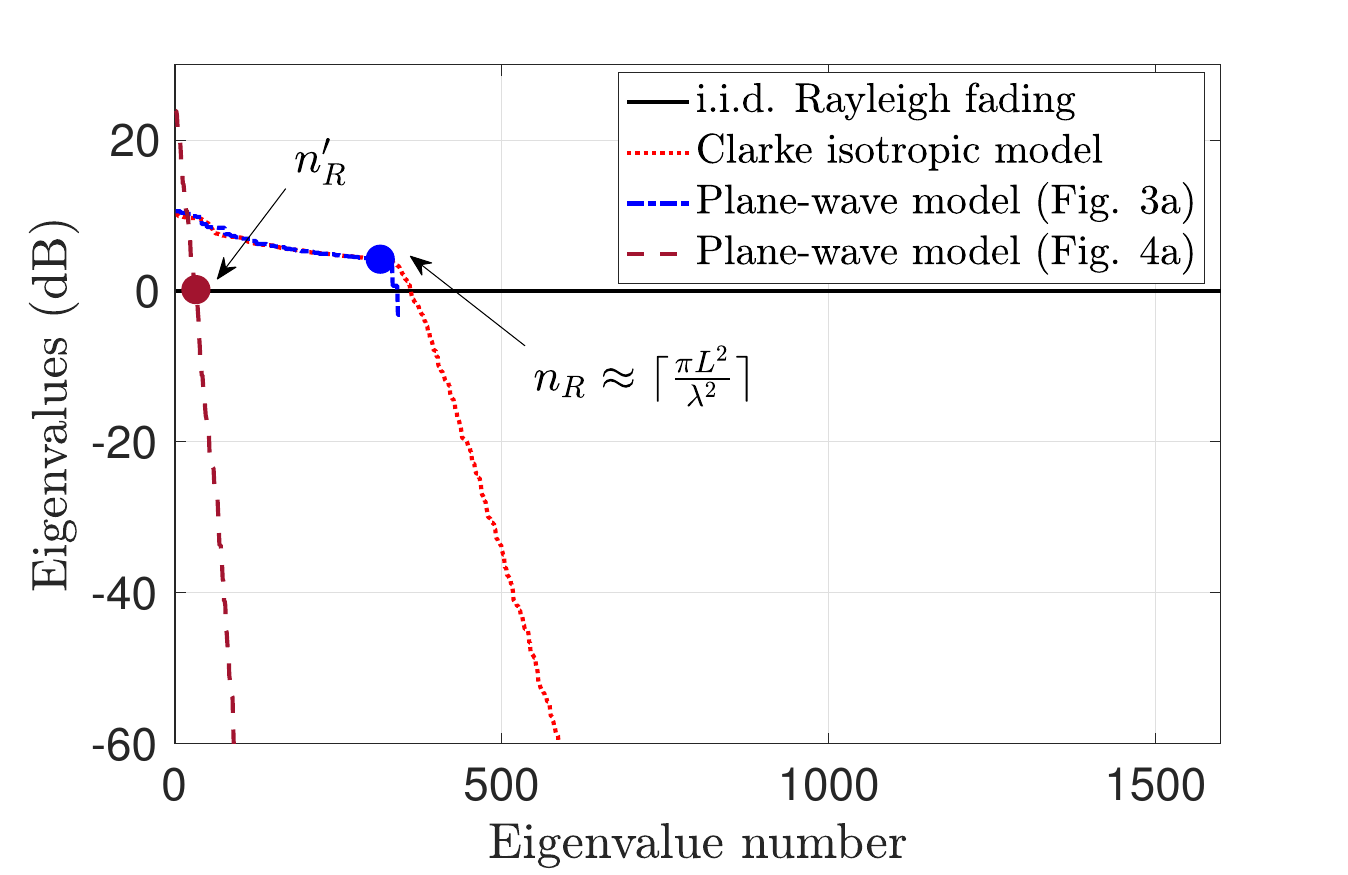}
         \caption{Array with $\lambda/4$ antenna spacing}
     \end{subfigure}
        \caption{{\small Channel eigenvalues (in dB) of $\vect{R}_R$ reported in a descending order for a squared array of size $L/\lambda=10$ in a setup with $\lambda/2$- and $\lambda/4$-spaced antenna elements (i.e., $N_R=400$ and $N_R=1600$). The Fourier plane-wave model for the scattering conditions in Fig.~\ref{fig:varIso}(a) and Fig.~\ref{fig:varNonIso}(a) is compared to the Clarke's isotropic model and i.i.d. Rayleigh fading model.}}
\label{fig:eigenvalues_lambda}
\end{figure}


The correlation properties of ${\bf H}$ depends on the eigenvalues of $\bf {R}$. For the Kronecker model in Corollary~\ref{th:separable_model} with symmetric scattering, we can concentrate only on the eigenvalues of $\vect{R}_R$. These are illustrated in Fig.~\ref{fig:eigenvalues_lambda} in dB sorted in a descending order for ${L/\lambda=10}$ and in a setup with $\lambda/2-$ and $\lambda/4-$spaced antenna elements (i.e., ${N_R=400}$ and ${N_R=1600}$), respectively.
Both isotropic and non-isotropic propagation conditions in Fig.~\ref{fig:varIso}(a) and Fig.~\ref{fig:varNonIso}(a) are considered. The number of significant coupling coefficients for the two cases is ${n_R^\prime=n_R\approx315}$ and ${n_R^\prime=36}$, as indicated by a circle on the corresponding curves. These determine the number of eigenvalues that carry the essential channel information.
Note that, for the i.i.d. Rayleigh fading model, we have $N_RN_S$ eigenvalues equal to $1$. Hence, the gap between the two is given by ${N_R}/{n_R} = {\lambda^2}/(\pi {\Delta_{R,x} \Delta_{R,y}})$ for a uniform antenna spacing of $\Delta_{R,x},\Delta_{R,y}$, which yields roughly a $1.2 \times$ and $5 \times$ overall increase in the number of eigenvalues. Remarkably, this error grows quadratically with the normalized antenna spacings.

\begin{figure*}[t!]
    \centering
     \includegraphics[width=1.5\columnwidth]{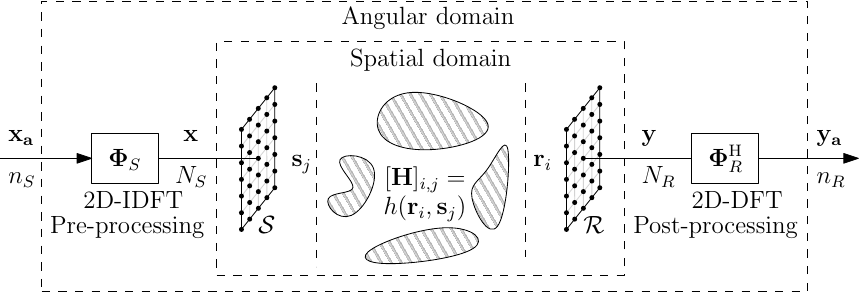} 
       \caption{{\small Transceiver architecture for communicating over the electromagnetic random MIMO channel.}}
   \label{fig:propagation_model}
\end{figure*}

As shown in Fig.~\ref{fig:eigenvalues_lambda}, the more uneven the coupling coefficients, the steeper the eigenvalues decay, which implies higher correlation. Ideally, if ${\boldsymbol{\Lambda} = N_SN_R \vect{I}_{n_S n_R}}$ in~\eqref{eigenvalues_channel} the channel samples would be mutually independent, thus leading to the i.i.d. Rayleigh fading model.
However, this is never the case.
In fact, the strengths of the coupling coefficients are not all equal even in the presence of isotropic propagation~\cite{PizzoJSAC20}, as it follows from Fig.~\ref{fig:varIso}. 
This proves that an electromagnetic random MIMO channel necessarily exhibits spatial correlation~\cite{PizzoJSAC20} and implies that the i.i.d. Rayleigh fading model shall never be used to model ${\bf H}$~\cite{bjornson2020rayleigh}. 
The closest physically-tenable model to an i.i.d. Rayleigh fading is the Clarke's isotropic model~\cite{PizzoJSAC20,MarzettaISIT}. This is generated from the Clarke's spatial correlation matrix whose $(i,j)-$th entry is $\sinc(2 d_{ij}/\lambda)$ where $d_{ij}$ is the distance between the $i-$th and $j-$th receive antennas.
As seen in Fig.~\ref{fig:eigenvalues_lambda}, the Fourier plane-wave model with isotropic propagation provides us with an $n_R-$order low-rank approximation of Clarke's model. The error due to ``discarding'' ${N_R-n_R}$ eigenvalues is approximately $4.6\%$ of the total channel power at $L/\lambda=10$. This reduces to $2.3\%$ when $L/\lambda=30$ and approaches zero asymptotically. In terms of capacity, for $L/\lambda=10$ we already obtain a high accuracy, as shown in \cite[Fig.~2]{PizzoASILOMAR20}.

\subsection{Channel generation and migration filters}
The generation of the electromagnetic MIMO channel in~\eqref{channel_matrix} requires only knowledge of the strength of the coupling coefficients~\eqref{variances_energy}, which are collected in the matrix $\boldsymbol{\Sigma}$. If this knowledge is available, the channel matrix $\vect{H}$ can be generated as follows: \emph{i}) Generate $\vect{W} \in \Complex^{n_R \times n_S}$ with independent entries $\CN(0,1)$; \emph{ii}) Compute the coupling matrix ${\vect{H}}_a$ in~\eqref{angular_coefficients}; \emph{iii}) Compute $\widetilde{\vect{H}}$ for any $s_z$ and $r_z > s_z$ as specified in Lemma~\ref{th:channel_matrix_lemma}; \emph{iv}) Obtain ${\vect{H}}$ in~\eqref{channel_matrix} as $\boldsymbol{\Phi}_R \big( \boldsymbol{\Phi}_S\widetilde{\vect{H}}^{\Htran}\big)^{\Htran}$. With uniform sampling, the matrices $\boldsymbol{\Phi}_S$ and $\boldsymbol{\Phi}_R$ reduce to two 2D IDFT transforms. Hence, the last step has a relatively low complexity due to the use of the FFT (Fast Fourier Transform). Notice that the computation of ${\vect{H}}$ for a different pair $(r_z,s_z)$ requires only to perform the third and fourth steps. The third step requires left- and right-multiplication by the propagators (migrators) $e^{\boldsymbol{\Gamma}_R}$ and $e^{-\boldsymbol{\Gamma}_S}$ that are known a priori as they are fully determined by the array geometry and wavelength. 
 
 Alternatively, the MIMO channel can be generated by using the eigendecomposition of the spatial correlation matrix in~\eqref{correlation_matrix} as ${\rm vec}({\bf H}) = {\bf U} \boldsymbol{\Lambda}^{1/2} {\rm vec}(\vect{W})$. This way to generate the MIMO channel should be used whenever one is interested in metrics (such as the capacity), where only the statistical equivalence between $\widetilde{\vect{H}}$ and ${\vect{H}}_a$ matters. In this case, the application of propagator matrices becomes irrelevant.

{The Fourier plane-wave series expansion of the channel in Theorem~\ref{th:series_expansion} generates a periodic stationary random field that repeats exactly after a 2D period of dimensions $(L_{S,x},L_{S,y})$ and $(L_{R,x},L_{R,y})$ at source and receiver, respectively. A similar periodic behavior is observed for the autocorrelation function in~\eqref{correlation_karhunen}, which must be continuous at the endpoints of each period. Hence, the array sizes must be large enough for the correlation properties of the aperiodic channel to be preserved,  in agreement with Assumption~\ref{Assumption1}.}

\subsection{Measurements of coupling coefficients} \label{sec:measurement_coupling_coeff}

The deterministic matrices $\boldsymbol{\Phi}_R$ and $\boldsymbol{\Phi}_S$ depend only on the array geometries, and thus are known. This is a useful property to estimate the coupling coefficients of ${\bf H}$. Indeed, pilot signals can be transmitted along their vectors. Hence, it is sufficient to transmit approximately $n_Rn_S$ pilot signals. Notice that $n_R$ and $ n_S$ depend on the normalized array length with consequent increase of pilot resources needed for channel estimation (see Fig.~\ref{fig:varIso} and Fig.~\ref{fig:varNonIso}).

The strength of the coupling coefficients depends exclusively on the scattering mechanisms, which evolve slowly in time compared to the fast variations of $\vect{H}$. This implies that they can be estimated with high accuracy on the basis of $\vect{H}$. From~\eqref{Fourier_series_discrete_KL}, they are given by
\begin{equation}\label{variances_energy}
\!\!\!\!\sigma^2(\ell_x,\ell_y,m_x,m_y) =  \mathbb{E}\left\{\left|\boldsymbol{\phi}_R^{\Htran}(\ell_x,\ell_y)\vect{H}\boldsymbol{\phi}_S(m_x,m_y)\right|^2\right\}
\end{equation}
which provides a possible way to estimate without the need of measuring the channel matrix ${\bf H}$; that is, transmit the signal $\boldsymbol{\phi}_S(m_x,m_y)$, project the received signal onto $\boldsymbol{\phi}_R(\ell_x,\ell_y)$, compute the square of the absolute value of measurement and take the average over the different measurements. Recall that only the knowledge of the $n_Rn_S$ real-valued coefficients $\{\sigma^2(\ell_x,\ell_y,m_x,m_y)\}$ is needed to fully specify the correlation matrix $\bf {R}$ in~\eqref{correlation_matrix}. To validate the accuracy of the developed model, the correlation matrix $\bf {R}$ should be compared to the sample correlation matrix obtained from real-world measurements of $\vect{H}$.

\section{Capacity Evaluation}  \label{sec:capacity}

We now use the developed channel model to numerically evaluate the capacity of the MIMO communication system in~\eqref{eq:MIMO_channel}. From the semi-unitary equivalence between the spatial-domain and angular-domain in Lemma~\ref{th:channel_matrix_lemma}, we have that~\eqref{eq:MIMO_channel} is equivalent to
\begin{equation} \label{conv_model_noise_angular}
\vect{y}_a  =  \sqrt{{\rm snr}} \vect{H}_a \vect{x}_a + \vect{n}_a 
\end{equation}
where $\vect{y}_a = {\boldsymbol \Phi}_R^{\Htran} \vect{y} \in \Complex^{n_R}$ and $\vect{x}_a = {\boldsymbol \Phi}_S^{\Htran} \vect{x} \in \Complex^{n_S}$ denote the received and transmitted signal vectors in the angular domain, respectively. Here, ${\rm snr}$ is the signal-to-noise ratio (SNR) at the receiver that is comprehensive of the large-scale fading coefficient.
Also, $\vect{n}_a  = {\boldsymbol \Phi}_R^{\Htran} \vect{n} \in \Complex^{n_R}$ is the angular noise vector distributed as $\vect{n}_a \sim\CN({\bf 0}, {\bf I}_{n_R})$. Unlike~\eqref{eq:MIMO_channel}, the entry $[{\bf H}_a]_{ij}$ represents the coupling coefficient between the $j$th angular set $\mathcal{W}_S(m_x,m_y)$ in~\eqref{S_s} and the $i$th angular set $\mathcal{W}_R(\ell_x,\ell_y)$ in~\eqref{S_r}. Under the assumption $\vect{Q}_a = \Ex\{\vect{x}_a^{\Htran} \vect{x}_a\} \le 1$, the ergodic capacity of~\eqref{conv_model_noise_angular} in bit/s/Hz is
\begin{align} \label{ergodic_capacity}
C  =  \max_{\vect{Q}_a: \tr(\vect{Q}_a) \le 1}  \Ex\{ \log_2 \det\left( \vect{I}_{n_R} + {\rm snr} \vect{H}_a \vect{Q}_a \vect{H}_a^{\Htran}\right)
\}.
\end{align}
 This is computed and quantified next under different degrees of channel state information.

\begin{remark}The transceiver architecture for communicating over the electromagnetic MIMO channel is illustrated Fig.~\ref{fig:propagation_model}. As seen, the transformation from the angular-domain to the spatial-domain (and viceversa) is fully determined by the matrices $\boldsymbol{\Phi}_S$ and $\boldsymbol{\Phi}_R^{\Htran}$, which depend only on the array geometries. With uniform sampling, $\boldsymbol{\Phi}_S$ and $\boldsymbol{\Phi}_R^{\Htran}$ become two-dimensional DFT and IDFT matrices, which can be efficiently implemented in the analog domain by using a Butler matrix \cite{Molisch2004} or by means of lens antenna arrays \cite{ZhangLens}. Signal processing algorithms operate in the angular domain and thus their complexity depend on ${\bf H}_a$, i.e., $n_R$ and $n_S$ at maximum. Hence, we can operate in a regime where $N_S \gg n_S$ and $N_R \gg n_R$ without any impact on the signal processing (e.g., channel estimation, optimal signaling, coding).
\end{remark}

\begin{figure}[t!]
    \centering
     \includegraphics[width=1.1\columnwidth]{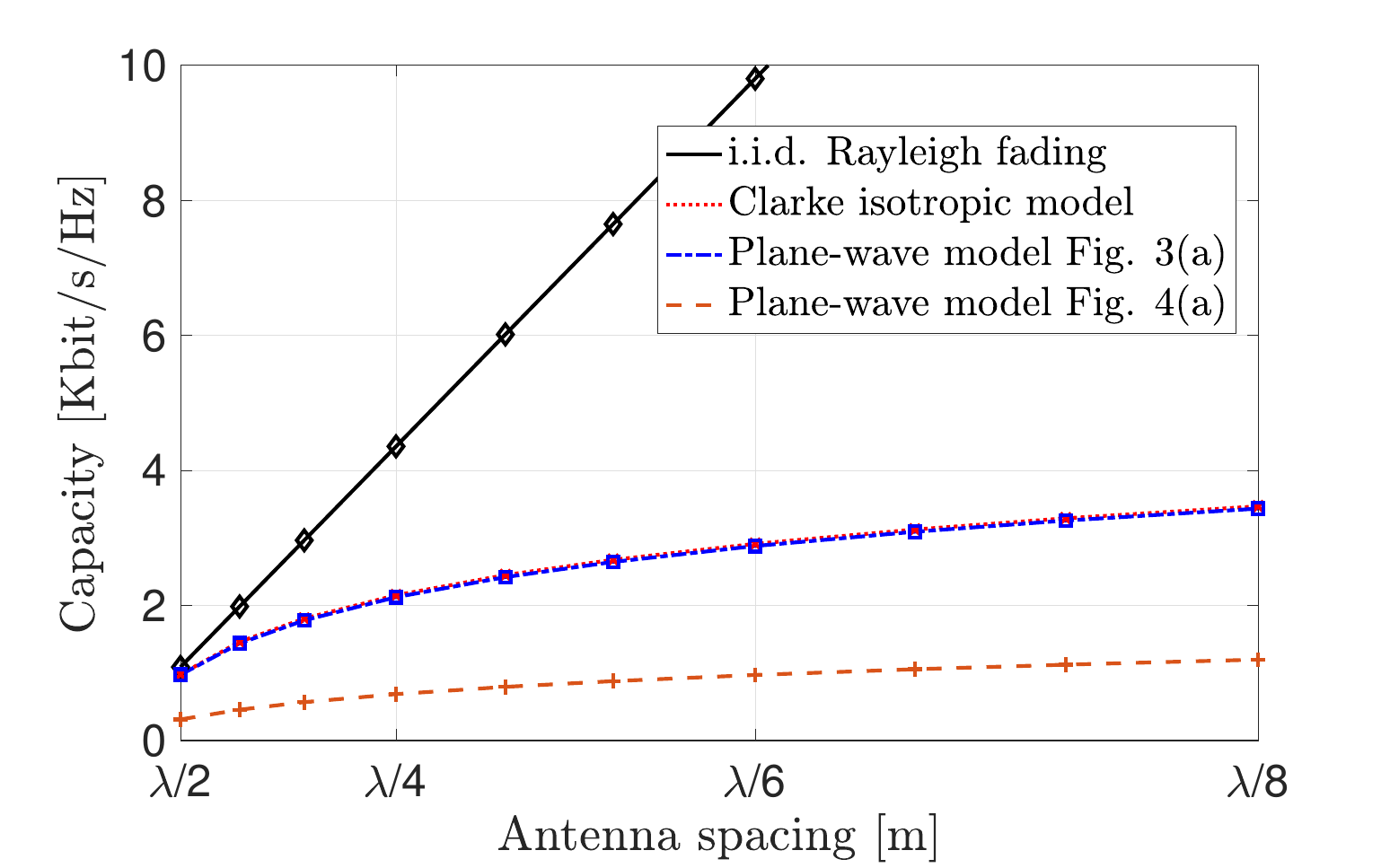} 
  \caption{{\small Ergodic capacity in Kbit/s/Hz as a function of antenna spacing $\Delta\in[\lambda/2,\lambda/8]$ for a squared array of size $L/\lambda=10$ and ${\rm snr} = 10$~dB. The Fourier plane-wave model is generated by using the setting of Fig.~\ref{fig:varIso}(a) and Fig.~\ref{fig:varNonIso}(a).}}
   \label{fig:capacity_NoCSI}
\end{figure}
\subsection{Perfect channel knowledge at receiver}

With instantaneous channel state information available at the receiver, the ergodic capacity in~\eqref{ergodic_capacity} is achieved by an i.i.d. input vector $\vect{x}_a$ with $\vect{Q}_a = \frac{1}{n_S}{\bf I}_{n_S}$ \cite{heath_lozano_2018} and is given by
\begin{align}\label{capacity_3}
C =\sum_{i=1}^{{\rm rank}(\vect{H}_a)}\mathbb{E}\left\{\log_2 \left(1 + \frac{{\rm snr} }{n_S} \, \lambda_i\left(\vect{H}_a\vect{H}_a^{\Htran}\right)\right) \right\} 
\end{align}
where $\{\lambda_i\left(\vect{A}\right)\}$ are the eigenvalues of an arbitrary $\vect{A}$. Under separability, \eqref{capacity_3} reduces to
\begin{align} \notag
C = & \sum_{i=1}^{{\rm rank}(\vect{H}_a)}\mathbb{E}\bigg\{\log_2 \Big(1 + \frac{{\rm snr}}{n_S} \times\\&\, \times\lambda_i\big(\diag(\boldsymbol{\sigma}_R \odot \boldsymbol{\sigma}_R)  \vect{W} \vect{W}^{\Htran}
\diag(\boldsymbol{\sigma}_S \odot \boldsymbol{\sigma}_S)\big)\Big) \bigg\}\label{capacity_separable}
\end{align}
as obtained plugging \eqref{matrix_angular_coefficients_kronecker} into \eqref{capacity_3}.
Under Assumption~\ref{Assumption1}, $n_S,n_R \gg 1$ and tools from random matrix theory can be used to asymptotically approximate~\eqref{capacity_separable} as~\cite[Eq.~(102)]{Tulino2005}
\begin{align} \notag
C \approx \sum_{j=1}^{n_S}\log_2&\left(\frac{1+{\rm snr} [\boldsymbol{\sigma}_S \odot \boldsymbol{\sigma}_S]_j \Gamma_R}{e^{{\rm snr}\Gamma_S\Gamma_R}}\right) \\&+ \sum_{i=1}^{n_R}\log_2\left(1+ {\rm snr} [\boldsymbol{\sigma}_R \odot \boldsymbol{\sigma}_R]_i \Gamma_S\right)\label{capacity_asym}
\end{align} 
where the coefficients $\Gamma_R,\Gamma_S$ are obtained by solving the fixed-point equations \cite[Eqs.~(103)--(104)]{Tulino2005}:
\begin{align}
\Gamma_R &= \frac{1}{n_S}\sum_{i=1}^{n_R} \frac{[\boldsymbol{\sigma}_R \odot \boldsymbol{\sigma}_R]_i}{1 + {\rho} [\boldsymbol{\sigma}_R \odot \boldsymbol{\sigma}_R]_i\Gamma_S} \\
\Gamma_S &= \frac{1}{n_S}\sum_{j=1}^{n_S} \frac{[\boldsymbol{\sigma}_S \odot \boldsymbol{\sigma}_S]_j}{1 + {\rho} [\boldsymbol{\sigma}_S \odot \boldsymbol{\sigma}_S]_j \Gamma_R}.
\end{align}
In Fig.~\ref{fig:capacity_NoCSI}, we plot the ergodic capacity in Kbit/s/Hz as a function of antenna spacing $\Delta\in[\lambda/2,\lambda/8]$ in the same setup of Fig.~\ref{fig:eigenvalues_lambda}(a) with $L/\lambda=10$.
The continuous lines are generated from~\eqref{capacity_separable} by using Monte Carlo simulations, whereas markers are generated according to the large-dimensional approximation in~\eqref{capacity_asym}. A perfect match is observed although $n_S,n_R$ are finite. Comparisons are made with the Clarke's model in which power is allocated only onto the most significant $n_S$ eigenmodes. The perfect match with the Fourier plane-wave model validates our physical low-rank approximation under isotropic propagation conditions. The capacity with i.i.d. Rayleigh fading is also reported as reference. Compared to this model, a large gap is observed for $\Delta < \lambda/2$ due to the correlation that naturally arises among antennas when $\Delta$ decreases. This confirms that i.i.d. Rayleigh fading is highly inadequate to model the channel with planar arrays of sub-wavelength spacing. In fact, it cannot be derived from physic principles when planar arrays are considered~\cite{PizzoJSAC20}. The error in terms of DoF is inversely proportional to the square of normalized antenna spacing.
Note that the capacity per transmitted stream of information is given by $C/\min(n_R,n_S)$. In a setup with ${\Delta = \lambda/2}$, it is approximately equal to {$3.4$ and $2.8$}~bit/s/Hz under the isotropic and non-isotropic scenarios described in Fig.~\ref{fig:varIso}(a) and Fig.~\ref{fig:varNonIso}(a), respectively.

\subsection{Perfect channel knowledge at source and receiver}

\begin{figure}[t!]
    \centering
     \includegraphics[width=1.1\columnwidth]{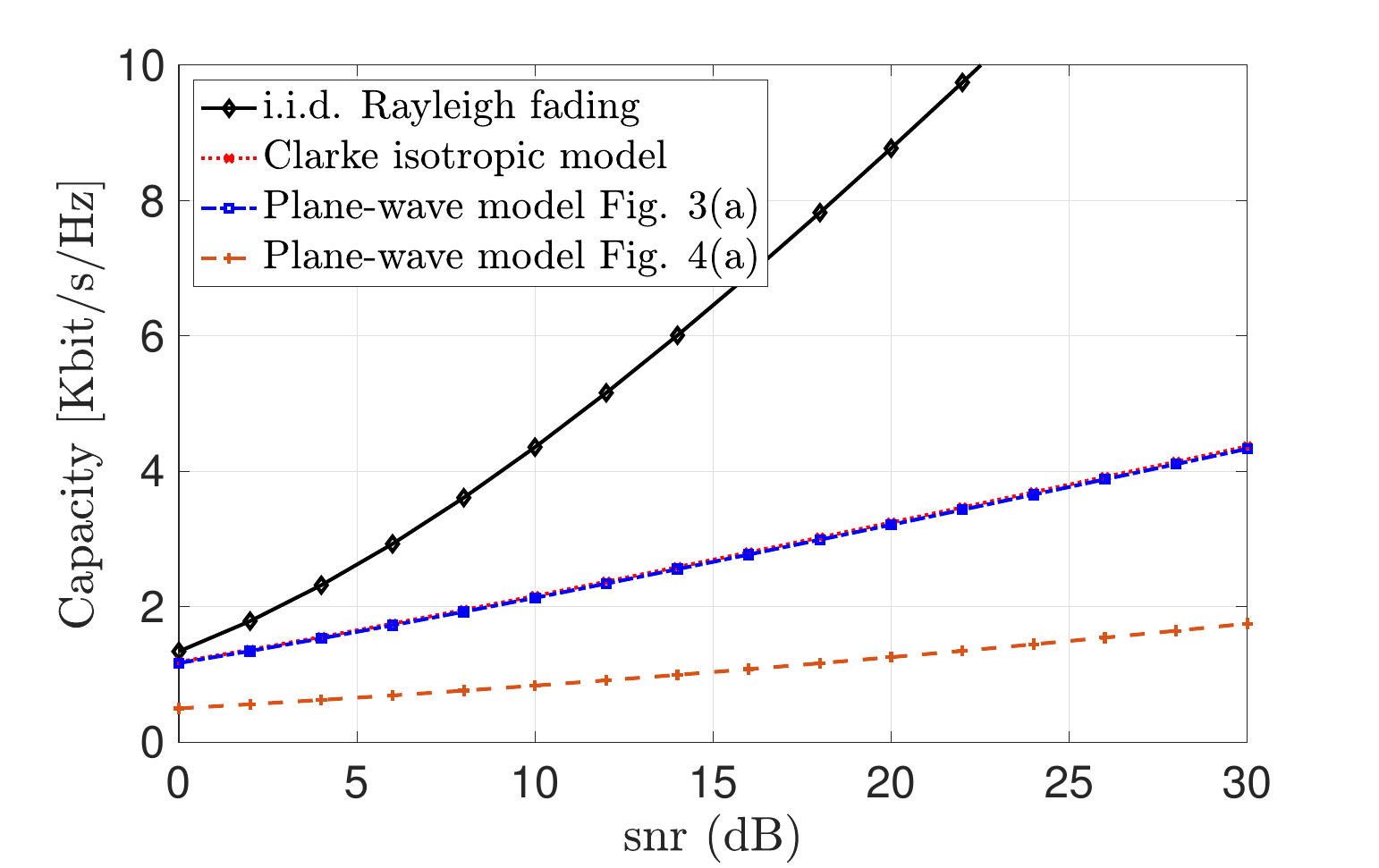}
  \caption{{\small Ergodic capacity $C$ in Kbit/s/Hz as a function of ${\rm snr}$ in~dB for $L/\lambda=10$ with $\lambda/4$-spaced antenna elements (i.e., $N_R=1600$). The Fourier plane-wave model is generated in the setups of Fig.~\ref{fig:varIso}(a) and Fig.~\ref{fig:varNonIso}(a).}}
   \label{fig:capacity_CSI_lambda4}
\end{figure}

Assume $\vect{H}_a$ is perfectly known at both sides and let ${\vect{H}_a = \vect{U}_a \boldsymbol{\Lambda}_a \vect{V}_a^{\Htran}}$ be its singular value decomposition. The capacity in~\eqref{ergodic_capacity} is achieved by a circularly-symmetric, complex-Gaussian angular input ${\vect{x}_a = \vect{V}_a \vect{P}_a^{1/2} \vect{s}_a}$ where $\vect{s}_a\in\Complex^{n_S}$ is an i.i.d. circularly-symmetric complex-Gaussian vector with unit variance and $\vect{P}_a \in \Complex^{n_S \times n_S}$ is diagonal with entries given by the optimal powers, computed via the waterfilling algorithm (e.g.,~\cite{heath_lozano_2018}). Hence, $\vect{x}_a$ has covariance matrix $\vect{Q}_a = \vect{V}_a \vect{P}_a \vect{V}_a^{\Htran}$ and the ergodic capacity is
\begin{align} \label{capacity_SVD}
C =\sum_{i=1}^{{\rm rank}(\vect{H}_a)} \log_2 \left( \mu \lambda_i(\vect{H}_a \vect{H}_a^{\Htran}) \right)^+
\end{align}
where $\mu$ is such that ${\rm snr} = \sum_{i=1}^{{\rm rank}(\vect{H}_a)} \left(\mu - {1}/{\lambda_i(\vect{H}_a \vect{H}_a^{\Htran})} \right)^+$. Fig.~\ref{fig:capacity_CSI_lambda4} plots~\eqref{capacity_SVD} as a function of ${\rm snr}$ in dB. As ${\rm snr}$ grows large, an increasing number of communications modes is activated. In the large SNR regime, the capacity scales linearly in $\log_2({\rm snr})$ with slope given by the number of DoF in~\eqref{DoF}.
Compared to i.i.d. Rayleigh fading and Clarke's model, the spatial correlation reduces the slope and introduces a negative shift in the capacity~\cite{heath_lozano_2018}. 

\subsection{Statical knowledge of the channel at the source}
The key message from the analysis in Section IV is that knowledge of the strength of coupling coefficients is needed to have full statistical knowledge of $\vect{H}_a$ in~\eqref{angular_coefficients}. A possible way to obtain this information is sketched in Section~\ref{sec:measurement_coupling_coeff}. Suppose now that this knowledge is perfectly available at the transmitter. Since $\vect{H}_a$ has independent entries whose marginal distributions are symmetric with respect to zero, the optimal angular covariance matrix is diagonal, i.e., ${\vect{Q}_a = \vect{P}_a}$ \cite[Th.~1]{Veeravalli}. 
The optimal $\vect{x}_a$ is ${\vect{x}_a = \vect{P}_a^{1/2} \vect{s}_a}$ where the information-bearing vector ${\vect{s}_a \in \Complex^{n_S}}$ is an i.i.d. circularly-symmetric complex-Gaussian vector with unit variance.
Hence, the capacity-achieving transmission strategy is to send statistically-independent streams of information angularly. 
Unlike the angular domain, in the spatial domain we have statistically correlated input symbols specified by the correlation matrix ${\vect{Q} = {\boldsymbol \Phi}_S \vect{Q}_a {\boldsymbol \Phi}_S^{\Htran} \in \Complex^{N_S\times N_S}}$. 

\section{Conclusions} \label{sec:conclusions}

We introduced a novel Fourier plane-wave stochastic channel model that is mathematically tractable and consistent with the physics of wave propagation. 
The developed model is even valid in the near-field and fully captures the essence of electromagnetic propagation under arbitrary scattering conditions. It is especially for, but not limited to, conducting research on future wireless systems with {electromagnetically large} and dense antenna arrays.  
Our hope is to excite the interest of the wireless research community toward the {development of physics-inspired models} that may push further the limits of MIMO communications \cite{Migliore_EM}.
An important extension of the proposed Fourier plane-wave model is the incorporation of polarized antenna arrays~\cite{MarzettaVector,MarzettaNOKIA} and also of the mutual coupling between antenna elements, which may critically affect the performance of dense arrays (e.g., \cite{Laas2020}).
{Real-world measurements are needed to support the developed theory by correctly extracting model parameters for a prescribed environmental class.}

\appendices

\section*{Appendix} \label{app:fourier_series}
We aim to provide a discrete approximation of the Fourier plane-wave representation in~\eqref{Fourier_planewave} for a channel observed over a large spatial region of finite extent, as specified in Assumption~1. To this end, we follow the same approach exemplified in \cite[App.~IV.A]{PizzoJSAC20} for a 1D time-domain random process and partition the integration region $\mathcal{D}(\kappa)\times \mathcal{D}(\kappa)$ of $h(\vect{r},\vect{s})$ uniformly with angular sets $\mathcal{W}_S(m_x,m_y)$ and $\mathcal{W}_R(\ell_x,\ell_y)$ (see \cite[Eq.~(61)]{PizzoJSAC20}):
\begin{align} \notag
&\frac{1}{(2\pi)^2} \mathop{\sum}_{\ell_x,\ell_y}  \mathop{ \sum}_{m_x,m_y}  \!\!\!
 \iiiint_{\mathcal{W}_S(m_x,m_y) \mathcal{W}_{R}(\ell_x,\ell_y)} \hspace{-2.5cm}
S^{1/2}(k_x,k_y,\kappa_x,\kappa_y)  \\ & \hspace{.3cm} \times W(k_x,k_y,\kappa_x,\kappa_y) a_R(\krx,\vect{r})
  a_S(\ktx,\vect{s}) \, dk_xdk_y d\kappa_xd\kappa_y\label{Fourier_planewave_partition}
\end{align}
where we used \eqref{scattering_response_nlos} with $\mathcal{W}_S(m_x,m_y)$ and $\mathcal{W}_R(\ell_x,\ell_y)$ being given by~\eqref{S_s} and \eqref{S_r}, respectively.
The application of the first mean-value theorem over each partition yields the approximated Fourier series expansion in~\eqref{Fourier_series} where each random coefficient $H_a(\ell_x,\ell_y,m_x,m_y)$ is given by
\begin{align}  \notag
\frac{1}{(2\pi)^2} &\iiiint_{\mathcal{W}_S(m_x,m_y) \times\mathcal{W}_R(\ell_x,\ell_y)}\hspace{-2.5cm}
S^{1/2}(k_x,k_y,\kappa_x,\kappa_y)  \\ & \hspace{.8cm} \times W(k_x,k_y,\kappa_x,\kappa_y) 
    dk_xdk_y d\kappa_xd\kappa_y\label{H_a_coeff}
\end{align}
for ${(m_x,m_y)\in\mathcal{E}_S}$ and ${(\ell_x,\ell_y)\in\mathcal{E}_R}$. 
Since these coefficients are obtained by projecting a 4D white-noise complex-Gaussian field $W(k_x,k_y,\kappa_x,\kappa_y)$ over a set of orthonormal functions, they are mutually-independent and circularly-symmetric, complex-Gaussian random variables \cite{PizzoJSAC20}. Their variances $\sigma^2(\ell_x,\ell_y,m_x,m_y)$ are obtained by computing the average power in \eqref{variances}, as shown next. 

\begin{figure}[t!]
    \centering
     \includegraphics[width=1\columnwidth]{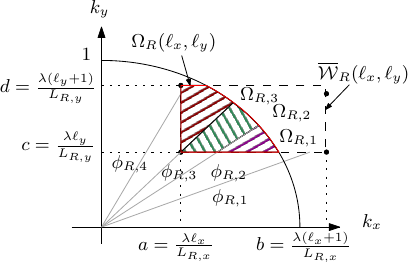}
  \caption{{\small Integration region $\Omega_R(\ell_x,\ell_y)$ in~\eqref{variances_5} for $\ell_x,\ell_y>0$.}}
   \label{fig:integration_subregion}
\end{figure}
Unlike \cite[App.~IV.C]{PizzoJSAC20}, where we were able to compute a closed-form expression for the variances under isotropic scattering, here we resort to a general numerical procedure as solving \eqref{variances} for every possible non-isotropic propagation conditions is pointless. Plugging \eqref{psd_4d} into \eqref{variances},
\begin{align}  \notag
 \iiiint_{\mathcal{W}_S(m_x,m_y) \times \mathcal{W}_{R}(\ell_x,\ell_y)} 
\hspace{-2.5cm}  & \mathbbm{1}_{\mathcal{D}(\kappa)}(k_x,k_y) \mathbbm{1}_{\mathcal{D}(\kappa)}(\kappa_x,\kappa_y) 
  \\ & \hspace{.8cm} \times \frac{A^2(k_x,k_y,\kappa_x,\kappa_y)}{\gamma(k_x,k_y) \gamma(\kappa_x,\kappa_y)}  dk_xdk_y d\kappa_xd\kappa_y\label{variances_1} 
\end{align}
where $ \mathbbm{1}_{\mathcal{D}(\kappa)}(\cdot)$ accounts for the circularly-bandlimited support of the channel and the proportionality constants are embedded into the spectral factor $A^2(k_x,k_y,\kappa_x,\kappa_y)$ to meet the unit average power constraint. After rescaling the integration domain in \eqref{variances_1} by $\kappa=2\pi/\lambda$, we obtain
\begin{align}  \notag
 \iiiint_{\overline{\mathcal{W}}_s(m_x,m_y) \overline{\mathcal{W}}_R(\ell_x,\ell_y)} 
\hspace{-2.5cm}  &\mathbbm{1}_{\mathcal{D}(1)}({k}_x,{k}_y) \mathbbm{1}_{\mathcal{D}(1)}({\kappa}_x,{\kappa}_y)
 \times \\ & \hspace{.8cm} \times \frac{A^2({k}_x,{k}_y,{\kappa}_x,{\kappa}_y)}{{\gamma}({k}_x,{k}_y) {\gamma}({\kappa}_x,{\kappa}_y)}  d{k}_x d{k}_y d{\kappa}_x d{\kappa}_y\label{variances_2} 
\end{align}
where $\overline{\mathcal{W}}_S(m_x,m_y)$ and $\overline{\mathcal{W}}_R(\ell_x,\ell_y)$ are the normalized angular sets obtained from ${\mathcal{W}}_S(m_x,m_y)$ in \eqref{S_s} and ${\mathcal{W}}_R(\ell_x,\ell_y)$ in \eqref{S_r}.
The integration variables coincide to the cosine directions that specify every transmit $\hat\ktx$ and receive $\hat\krx$ propagation directions.
The receive integration region $\bar{\mathcal{W}}_R(\ell_x,\ell_y)$ in \eqref{variances_2} is illustrated in Fig.~\ref{fig:integration_subregion} for the first wavenumber quadrant only, that is, $(\ell_x,\ell_y)\in\mathcal{E}_R$ such that $\ell_x,\ell_y>0$.
Due to the rotational symmetry of $\gamma(\cdot,\cdot)$ in \eqref{variances_2} we change integration variables to polar wavenumber coordinates $({k}_x,{k}_y) = ({k}_r \cos{k}_\phi, {k}_r \sin{k}_\phi)$ with ${k}_r \in [0,1]$ and ${k}_\phi\in[0,2\pi)$:
\begin{align}  \notag
 \iiiint_{\overline{\mathcal{W}}_S(m_x,m_y) \times \overline{\mathcal{W}}_{R}(\ell_x,\ell_y)} 
\hspace{-2.5cm}  &\mathbbm{1}_{[0,1]}({k}_r) \mathbbm{1}_{[0,1]}({\kappa}_r)  \\ & \times
 \frac{A^2({k}_r,{k}_\phi,{\kappa}_r,{\kappa}_\phi) {k}_r {\kappa}_r}{\sqrt{1-{k}_r^2} \sqrt{1-{\kappa}_r^2}}  d{k}_r d{k}_\phi d{\kappa}_r d{\kappa}_\phi.\label{variances_3} 
\end{align}
Typically, the field's directionality is expressed in spherical coordinates, i.e., elevation $(\theta_S,\theta_R) \in [0,\pi]$ and azimuth $(\phi_S,\phi_R)\in[0,2\pi)$ angles through the spectral factor $A^2(\theta_R,\phi_R,\theta_S,\phi_S)$.
The map between wavenumber coordinates and spherical coordinates is ${k_x = \sin\theta_R \cos \phi_R}$ and ${k_y = \sin\theta_R \sin\phi_R}$, which substituted into \eqref{gamma} yields ${\gamma(k_x, k_y) = \cos \theta_R}$, e.g., at receiver. 
The polar wavenumber coordinates follow directly as $k_r = \sin\theta_R$ and $k_\phi = \phi_R$
with Jacobian given by $|{\partial( k_r, k_\phi)}/{\partial(\theta_R,\phi_R)}| = \cos\theta_R$. 
In doing so, the terms at the denominator of \eqref{variances_3} disappear as they are embedded into the Jacobian of this transformation.
Notice that the circularly-bandlimited constraint in~\eqref{variances_3} implies that $\theta_R \in[0,\pi/2]$.
In other words, we consider propagation directions $\krx$ defined over the unit upper hemisphere,\footnote{For the unit lower hemisphere, replace $-\gamma( k_x,  k_y)$ with $\gamma( k_x,  k_y)$, which leads to $\theta_R \in(\pi/2,\pi]$. Physically, this corresponds to a propagation scenario with scatterers located behind the receiver \cite{PizzoIT21}.} which leads to
\begin{align} \notag
\iiiint_{\overline{\mathcal{W}}_S(m_x,m_y) \times \overline{\mathcal{W}}_R(\ell_x,\ell_y)} 
\hspace{-2.5cm}  & \mathbbm{1}_{[0,\pi/2]}(\theta_R)  \mathbbm{1}_{[0,\pi/2]}(\theta_S)  \\ & \hspace{.8cm}  \times A^2(\theta_R,\phi_R,\theta_S,\phi_S)   \, d\Omega_S d\Omega_R\label{variances_4} 
\end{align}
where $d\Omega_R = \sin \theta_R d\theta_R d\phi_R$ and $d\Omega_S = \sin \theta_S d\theta_S d\phi_S$ are the differential element of solid angles pointed by $\hat \ktx$ and $\hat \krx$.
The above formula can be compactly rewritten as
\begin{align}\notag
\sigma^2&(\ell_x,\ell_y,m_x,m_y)  \\& = \iiiint_{\Omega_S(m_x,m_y) \times \Omega_R(\ell_x,\ell_y)} 
\hspace{-1.5cm} A^2(\theta_R,\phi_R,\theta_S,\phi_S)   \, d\Omega_S d\Omega_R \label{variances_5} 
\end{align}
where $\Omega_R(\ell_x,\ell_y)$ is the intersection of set $\overline{\mathcal{W}}_R(\ell_x,\ell_y)$ and the projected upper hemisphere, e.g., at the receiver (see Fig.~\ref{fig:integration_subregion}).
Next, we express the integration region $\Omega_R(\ell_x,\ell_y)$ as a function of the spherical angles $(\theta_R,\phi_R)$ for all $(\ell_x,\ell_y)\in \mathcal{E}_R$. A similar procedure should be considered for the source region.
Let $a = \lambda \ell_x/L_{R_x}$, $b = \lambda (\ell_x+1)/L_{R_x}$, $c = \lambda \ell_y/L_{R_y}$, and $d = \lambda (\ell_y+1)/L_{R_y}$ be the cosine directions evaluated in correspondence of the four vertices of $\overline{\mathcal{W}}_R(\ell_x,\ell_y)$ in Fig.~\ref{fig:integration_subregion}.
These divide the integration region $\Omega_R(\ell_x,\ell_y)$ into three subregions:  $\phi_R\in[\phi_{R,1},\phi_{R,2}]$, $\phi_R\in[\phi_{R,2},\phi_{R,3}]$, and $\phi_R\in[\phi_{R,3},\phi_{R,4}]$, which are limited by the azimuth angles $\phi_{R,1} < \phi_{R,2} < \phi_{R,3} < \phi_{R,4}$. 
Hence, \eqref{variances_5} can be rewritten as
\begin{align} \notag
\sigma^2&(\ell_x,\ell_y,m_x,m_y) \\& = \sum_{i=1}^3 \sum_{j=1}^3 \iiiint_{\Omega_{R,i}(\ell_x,\ell_y) \times \Omega_{S,j}(m_x,m_y)}
\hspace{-2.5cm}  A^2(\theta_R,\phi_R,\theta_S,\phi_S)   \, d\Omega_S d\Omega_R\label{variances_6}
\end{align}
where $\Omega_{R,i}(\ell_x,\ell_y)  = \{\theta_R\in [\theta_{R,{\rm min}}(\phi_R), \theta_{R,{\rm max}}(\phi_R)], \phi_R \in[\phi_{R,i},\phi_{R,i+1}]\}$. The integration regions are function of the fourth orthants and are not reported here due to space limitation.

\bibliographystyle{IEEEbib}
\bibliography{IEEEabrv,refs}

\end{document}